\newcolumntype{C}{>{\Centering\arraybackslash}X} 
\title{\huge Safe Stabilization with Model Uncertainties:  \\
A Universal Formula with Gaussian Process Learning  }
\author{Ming Li and Zhiyong Sun} 
\begin{document}
\newtheorem*{Exam*}{Example}
\newtheorem*{Exam_Con*}{Example-Continued}
\newtheorem{Thm}{\textbf{Theorem}}
\newtheorem{Lem}{\textbf{Lemma}}
\newtheorem{Def}{\textbf{Definition}}
\newtheorem{Rem}{\textbf{Remark}}
\newtheorem{Exam}{\textbf{Example}}
\newtheorem{Sup}{\textbf{Assumption}}
\newtheorem{Cor}{\textbf{Corollary}}
\newtheorem{Asum}{\textbf{Assumption}}
\newtheorem{Expl}{\textbf{Explanation}}
\newtheorem{Prop}{\textbf{Proposition}}
\newtheorem{Rmk}{\textbf{Remark}}
\maketitle

\begin{abstract}
A combination of control Lyapunov functions (CLFs) and control barrier functions (CBFs) forms an efficient framework for addressing control challenges in safe stabilization. In our previous research, we developed an analytical control strategy, namely the universal formula, that incorporates CLF and CBF conditions for safe stabilization. However, successful implementation of this universal formula relies on an accurate model, as any mismatch between the model and the actual system can compromise stability and safety. In this paper, we propose a new universal formula that leverages Gaussian processes (GPs) learning to address safe stabilization in the presence of model uncertainty. By utilizing the results related to bounded learning errors, we  achieve a high probability of stability and safety guarantees with the proposed universal formula. Additionally, we introduce a probabilistic compatibility condition to evaluate conflicts between the modified CLF and CBF conditions with GP learning results. In cases where compatibility assumptions fail and control system limits are present, we propose a modified universal formula that relaxes stability constraints and a projection-based method accommodating control limits. We illustrate the effectiveness of our approach through a simulation of adaptive cruise control (ACC), highlighting its potential for practical applications in real-world scenarios.
\end{abstract}
\let\thefootnote\relax\footnotetext{This work was supported in part by a starting grant from Eindhoven Artificial Intelligence Systems Institute (EAISI), The Netherlands; in part by the National Natural Science Foundation of China (62073274). \emph{(Corresponding author: Zhiyong~Sun.)}

The authors are with the Department of Electrical Engineering, Eindhoven University of Technology, and also with the Eindhoven Artificial Intelligence Systems Institute, PO Box 513, Eindhoven 5600 MB, The Netherlands. 
{\tt\small \{ m.li3, z.sun \}@tue.nl}
}
\section{Introduction}
Safe stabilization, which refers to the task of guiding a system to a desired state within a predefined safety set, remains a persistent and significant challenge. Control Lyapunov functions (CLFs)~\cite{Global_Stabilizable} and control barrier functions (CBFs)~\cite{quadrotor_safe_stabilization2} are effective tools for addressing stabilization and safety-critical control problems, respectively. In recent years, they have been combined to resolve the safe stabilization problems in numerous applications, such as adaptive cruise control (ACC)~\cite{ACC_Application}, bipedal robot walking~\cite{bipedal_application}, and multi-robot coordination~\cite{multi_robot_coordination}.

In general, there are three major categories of solutions for combining CLFs and CBFs: i) construct a proper function that combines Lyapunov functions and barrier functions, namely control Lyapunov barrier function (CLBF)~\cite{CLBF}, and then a feedback control law is constructed using Sontag's universal formula~\cite{Sontag_fromula}, ii) formulate an optimization-based framework that utilizes the conditions of CLFs and CBFs with a quadratic program (QP)~\cite{CBF_Definition}, and iii)  design an analytical control law, named a universal formula, which provides a combination of the control law obtained from CLFs and CBFs, respectively. As shown in Fig~\ref{Summary}, an overview of the existing solutions for the combination of CLFs and CBFs is provided. 

\begin{figure*}[tp]
 \centering
    \makebox[0pt]{%
    \includegraphics[width=7in]{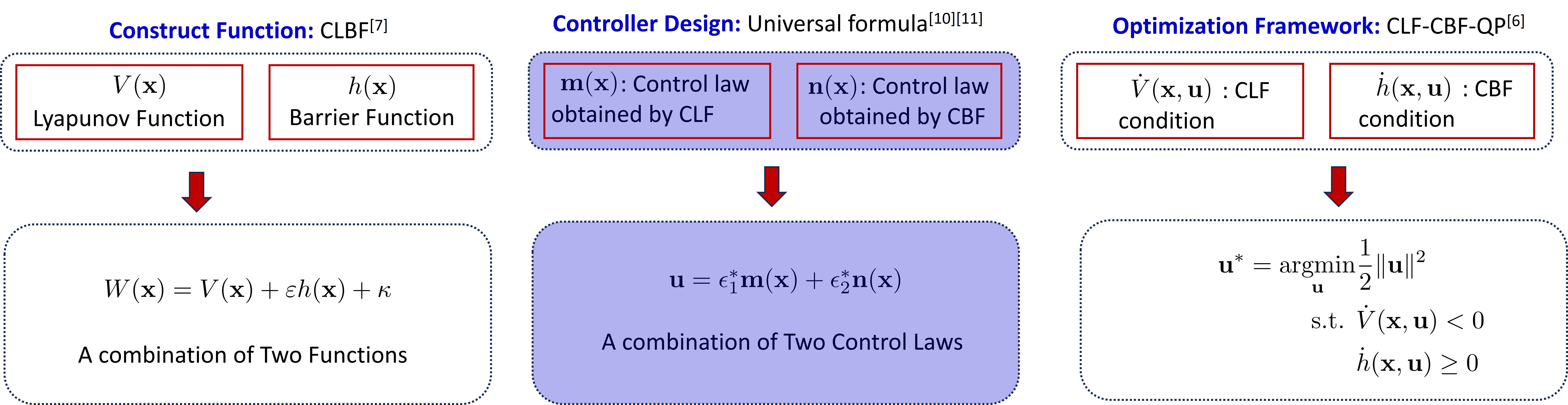}}
    \caption{An overview of the existing approaches that combine CLFs and CBFs to address safe stabilization problems.
   }
    \label{Summary}
\end{figure*}
The CLBF solution~\cite{CLBF}, while commended for its simplicity, encounters challenges in satisfying the prerequisites for the existence of CLBF~\cite{Property_on_CLBF}. Moreover, it may fail to simultaneously ensure stability and safety, even when the conditions of both CLFs and CBFs are possible to be satisfied~\cite {Comment_on_CLBF}. As for the optimization-based methods, a combination of CLFs and CBFs with a quadratic program (QP) formulation is proposed in~\cite{CBF_Definition}. Although they have demonstrated effectiveness, they are limited by the need to solve optimization problems, making it infeasible to perform real-time computations on computation resource-constrained platforms. 
 Moreover, because the control input is obtained through solving an optimization problem, it often leads to control laws that lack smoothness guarantees. 
In comparison, the controller design method offers several advantages. It is more straightforward since it directly manipulates control inputs, which aligns with the desired objectives. Consequently, it allows for direct parameter tuning within the control law to achieve specific properties, such as smoothness~\cite{Smooth_universal}. The universal formula is a special control design approach~\cite{universal_formula,Smooth_universal}, which provides an analytical formula concerning the conditions of CLFs and CBFs. As a result, it provides a fast computation speed and is suitable for platforms with constrained computation resources.

However, one major limitation shared by the universal formula, CLBF, and optimization-based solutions lies in their requirement for a precise system model. For safe stabilization, discrepancies between the actual system dynamics and the model can lead to a degradation in performance and result in unsafe behaviors. To resolve this challenge, various techniques have been introduced, including robust controllers~\cite{Comment_on_CLBF,ISSf}, observer-based design~\cite{Observer_CBF}, and machine learning-based methods~\cite{machine_learning_CBF,GP_CBF_Ref,CLBF_NN}, all aiming at mitigating the influence of model uncertainties. However, these endeavors have primarily concentrated on CLBF and optimization-based techniques. To the best of the authors' knowledge, no prior work has delved into \textit{controller design methods} for safe stabilization while accommodating model uncertainties.

Gaussian processes (GPs), a machine learning methodology renowned for its non-parametric probabilistic modeling framework and extensively used in designing controllers for systems with unknown dynamics, serve as the inspiration for our approach. In this paper, we integrate the previously proposed universal formula~\cite{universal_formula} with GPs to address the challenges of safe stabilization in the presence of model uncertainties. Utilizing a GP model to represent the system's unknown dynamics enables us to access both mean and variance functions. By incorporating the learning results into the conditions of CLFs and CBFs and leveraging the boundedness of the learning error, we  provide high probability safety assurances, even when the model is inaccurate. Meantime, we present the compatibility conditions, which assess conflicts between CLF and CBF conditions, to evaluate whether a safe stabilization problem can be addressed with our universal formula with GPs. For the cases where compatibility assumption does not hold and control system limits exist, we introduce  a modified universal formula that relaxes stability constraints and a projection-based method that takes into account control limits. Finally, we illustrate the effectiveness of our approach through a simulation of adaptive cruise control (ACC), which showcases its potential for practical applications in real-world scenarios.
\section{Preliminaries and Problem Statement}
Consider a dynamic system that has a control-affine structure~\cite{Dynamical_model}.
	\begin{equation}\label{Affine_Control_System}
	    \dot{\mathbf{x}}=\mathbf{f}(\mathbf{x})+\mathbf{g}(\mathbf{x})\mathbf{u},\quad \mathbf{x}(0)=\mathbf{x}_{0},
	\end{equation}
where $\mathbf{x}\in\mathbb{R}^{n}$, $\mathbf{u}\in\mathbb{R}^{m}$, and the vector fileds $\mathbf{f}:\mathbb{R}^{n}\rightarrow\mathbb{R}^{n}$ and $\mathbf{g}:\mathbb{R}^{n}\rightarrow\mathbb{R}^{n\times m}$ are  assumed to be locally Lipschitz continuous and $\mathbf{f}(\mathbf{0})=\mathbf{0}$.


\subsection{CLF and Universal Formulas}
Firstly, we provide a formal definition of a CLF in Definition~\ref{CLF}. Compared to the definition in~\cite{Sontag_fromula}, the standard CLF condition is tightened to demonstrate its relationship with the universal formula.
\begin{Def}\label{CLF}
(CLF) A continuously differentiable, positive definite, and radially unbounded function $V: \mathbb{R}^{n}\rightarrow\mathbb{R}_{+}$ is a CLF for system \eqref{Affine_Control_System} if there exists a control $\mathbf{u}\in\mathbb{R}^{m}$ satisfying
\begin{equation}\label{CLF_Condition}
\begin{aligned}
& a(\mathbf{x})+\mathbf{b}(\mathbf{x}) \mathbf{u}\leq -\kappa\zeta(\mathbf{x})
\end{aligned}
\end{equation}
for all $\mathbf{x}\in\mathbb{R}^{n}\backslash\{ \mathbf{0}\}$, where $a(\mathbf{x})=L_{\mathbf{f}} V(\mathbf{x})+\lambda V(\mathbf{x})$, $\mathbf{b}(\mathbf{x})=L_{\mathbf{g}} V(\mathbf{x})$, $\kappa\geq 0$, and $\zeta(\mathbf{x})$ is a positive definite function. $L_{\mathbf{f}}$ and $L_{\mathbf{g}}$ denote the Lie derivatives along $\mathbf{f}$ and $\mathbf{g}$, respectively.
\end{Def}
Referring to the condition~\eqref{CLF_Condition}, we can always guarantee that $a(\mathbf{x})+\mathbf{b}(\mathbf{x}) \mathbf{u} \leq  0$ for all $\mathbf{x}\in\mathbb{R}^{n}\backslash\{ \mathbf{0}\}$, which is the standard CLF condition in~\cite{Sontag_fromula}. Therefore, we claim that the condition \eqref{CLF_Condition} serves as a sufficient condition of the standard CLF condition, and it degrades to the standard one when $\kappa=0$. 
Thus, the goal is to obtain a state feedback control law $\mathbf{u}: \mathbb{R}^{n}\rightarrow\mathbb{R}^{m}$ that satisfies the condition~\eqref{CLF_Condition} for any $\mathbf{x}\in\mathbb{R}^{n}\backslash{ \{\mathbf{0}\}}$. Toward this objective, Sontag's universal formula~\cite{Sontag_fromula} can be employed.
\begin{equation}\label{CLF_Sontag_Law}
\begin{split}
\mathbf{u}_{\mathrm{Uni-CLF}}^{\star}(\mathbf{x})=\left\{\begin{array}{cc}
\mathbf{m}_{\mathrm{Uni-CLF}}^{\star}(\mathbf{x}), & \mathbf{b}(\mathbf{x}) \neq \mathbf{0} \\
\mathbf{0}, & \mathbf{b}(\mathbf{x})=\mathbf{0}
\end{array}\right.
\end{split}
\end{equation}
where  $\mathbf{m}_{\mathrm{Uni-CLF}}^{\star}(\mathbf{x})=-\frac{a(\mathbf{x})+\kappa\zeta(\mathbf{x})}{\mathbf{b}(\mathbf{x}) \mathbf{b}(\mathbf{x})^{\top}} \mathbf{b}(\mathbf{x})^{\top}$, $\phi(\mathbf{x})$ is a positive semi-definite function, and we select $\zeta(\mathbf{x})=\sqrt{a^{2}(\mathbf{x})+\phi(\mathbf{x})\|\mathbf{b}(\mathbf{x}) \|^{4}}$. One can observe that the universal formula provided in~\eqref{CLF_Sontag_Law} ensures the satisfaction of~\eqref{CLF_Condition}, which means that the closed-loop stability is guaranteed. Additionally,   in contrast to the universal formula presented in~\cite{Sontag_fromula}, our introduced parameter $\kappa\geq 0$ serves as a key addition to the universal formula. This parameter introduces the flexibility to control the speed of stabilization.

\subsection{CBF and Universal Formulas}
Consider a closed convex set $\mathcal{C}\subset\mathbb{R}^{n}$ as the $0$-superlevel set of a  continuously differentiable function $h:\mathbb{R}^{n}\rightarrow\mathbb{R}$, which is defined as
\begin{equation}\label{Invariant_Set}
		\begin{aligned}
			\mathcal{C} & \triangleq\left\{\mathbf{x}\in\mathbb{R}^{n}: h(\mathbf{x}) \geq 0\right\} \\
			\partial \mathcal{C}, & \triangleq\left\{\mathbf{x}\in\mathbb{R}^{n}: h(\mathbf{x})=0\right\}, \\
			\operatorname{Int}(\mathcal{C}) & \triangleq\left\{\mathbf{x}\in\mathbb{R}^{n}: h(\mathbf{x})>0\right\},
		\end{aligned}
\end{equation}
where we assume that $\mathcal{C}$ is nonempty and has no isolated points, that is, $\operatorname{Int}(\mathcal{C}) \neq \emptyset$ and $\overline{\operatorname{Int}(\mathcal{C})}=\mathcal{C}$.
Similar to the CLF definition, we establish a sufficient condition for the standard CBF condition, as defined in~\cite{Zeroing_CBF}, to demonstrate that the universal formula follows a more conservative condition than the standard CBF.
\begin{Def}\label{CBF_Def}
(CBF) Let $\mathcal{C}\subset\mathbb{R}^{n}$ be the $0$-superlevel set of a continuously differentiable function $h:\mathbb{R}^{n}\rightarrow\mathbb{R}$ which is defined by \eqref{Invariant_Set}. Then $h$ is a CBF for \eqref{Affine_Control_System} if, for all $\mathbf{x}\in\mathcal{C}$, there exists a control $\mathbf{u}\in\mathbb{R}^{m}$ satisfying
		\begin{equation}\label{CBF_Condition}
    \begin{split}
        &c(\mathbf{x})+\mathbf{d}(\mathbf{x})\mathbf{u}\geq \rho\Gamma(\mathbf{x})
    \end{split}
\end{equation}
where $c(\mathbf{x})=L_{\mathbf{f}} h(\mathbf{x})+\beta{h(\mathbf{x})}$, $\beta\geq 0$, $\mathbf{d}(\mathbf{x})=L_{\mathbf{g}} h(\mathbf{x})$, $\rho\geq 0$, and $\Gamma(\mathbf{x})$ is a positive definite function.
\end{Def}
The universal control law for CBF is given as 
\begin{equation}\label{CBF_Universal_Law}
\begin{split}
\mathbf{u}_{\mathrm{Uni-CBF}}^{\star}(\mathbf{x})=\left\{\begin{array}{cc}
\mathbf{n}_{\mathrm{Uni-CBF}}^{\star}(\mathbf{x}), & \mathbf{d}(\mathbf{x})\neq\mathbf{0},\\ \mathbf{0},&\mathbf{d}(\mathbf{x})=\mathbf{0},
\end{array}\right.
\end{split}
\end{equation}
where $\mathbf{n}_{\mathrm{Uni-CBF}}^{\star}(\mathbf{x})=\frac{\rho\Gamma(\mathbf{x})-c(\mathbf{x})}{\mathbf{d}(\mathbf{x}) \mathbf{d}(\mathbf{x})^{\top}}\mathbf{d}(\mathbf{x})^{\top}$, $\varphi(\mathbf{x})$ is a positive semi-definite function, and we choose $\Gamma(\mathbf{x})=\sqrt{c^{2}(\mathbf{x})+\varphi(\mathbf{x})\|\mathbf{d}(\mathbf{x}) \|^{4}}$. Note that the parameter $\rho\geq 0$ is employed to regulate the level of conservatism in the safety guarantees.
\subsection{Universal Formula for Safe Stabilization}
\begin{Def}
(Compatibility of CLF and CBF) The CLF $V(\mathbf{x})$ and CBF $h(\mathbf{x})$ are compatible for the dyanmical model~\eqref{Affine_Control_System} if there exists a control input $\mathbf{u}\in\mathbb{R}^{m}$ satisfying~\eqref{CLF_Condition} and~\eqref{CBF_Condition} simultaneously. If $V(\mathbf{x})$ and  $h(\mathbf{x})$ are compatible at every point of $\mathbf{x}\in\mathbb{R}^{n}$, the functions $V:\mathbb{R}^{n}\rightarrow\mathbb{R}_{>0}$ and $h:\mathbb{R}^{n}\rightarrow\mathbb{R}_{>0}$ are compatible.
\end{Def}
As illustrated in Fig.~\ref{Incompatible}, one collision avoidance example and the graphical interpretation of compatibility are provided. Note that $\Gamma_{c}: =\{\mathbf{x}|V(\mathbf{x})\leq c\}, c>0$, represented by the black ellipsoid curves, denotes the domain of attraction. In certain regions, such as those containing the depicted blue curves, $V(\mathbf{x})$ and $h(\mathbf{x})$ are compatible across all states. However, the yellow curve becomes stuck at the edge of the safe region $\mathcal{C}$. It indicates that, at the boundary point corresponding to a state $\mathbf{x}$, the CLF $V(\mathbf{x})$ and CBF $h(\mathbf{x})$ are incompatible. 
\begin{figure}[tp]
 \centering
    \makebox[0pt]{%
    \includegraphics[width=1.8in]{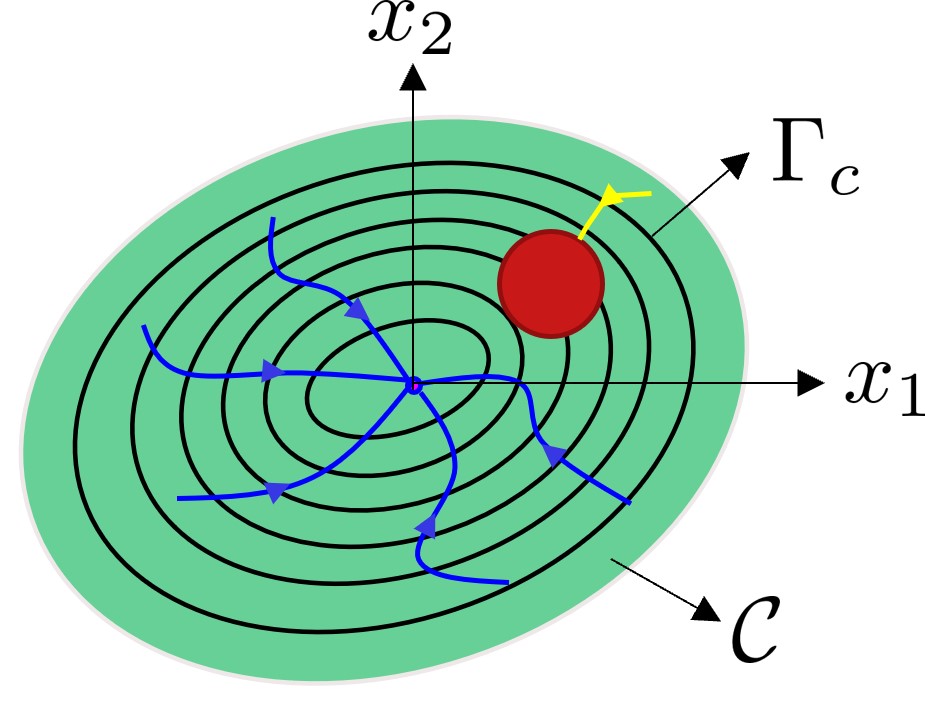}}
    \caption{A collision avoidance example and the graphical interpretation of compatibility: the green area represents the safe set $\mathcal{C}$, the shaded red area indicates the unsafe region, the black ellipsoidal curve denotes the domain of attraction $\Gamma_{c}$, the blue trajectories depict the convergence to the origin from various initial states, and the yellow curve serves to illustrate the example of incompatibility occurring at certain states.
   }
    \label{Incompatible}
\end{figure}
\begin{Lem}\label{Compatible_Lemma}
(\cite{universal_formula}) Assume that both $\mathbf{b}(\mathbf{x})$ and $\mathbf{d}(\mathbf{x})$ are nonzero vectors. The CLF $V(\mathbf{x})$ and CBF $h(\mathbf{x})$ for the model~\eqref{Affine_Control_System} are compatible if and only if one of the following conditions is satisfied. 
\begin{equation}\label{Compatibility_conditions}
\begin{split}
\begin{cases}
     &\frac{\|\mathbf{b}(\mathbf{x})\mathbf{d}(\mathbf{x})^{\top}\|}{\|\mathbf{b}(\mathbf{x})\|\|\mathbf{d}(\mathbf{x})\|}\neq 1,\\
     &\frac{\|\mathbf{b}(\mathbf{x})\mathbf{d}(\mathbf{x})^{\top}\|}{\|\mathbf{b}(\mathbf{x})\|\|\mathbf{d}(\mathbf{x})\|}= 1\,\,\, \text{and}\,\,\, v(\mathbf{x})\geq 0, \\
     &\frac{\|\mathbf{b}(\mathbf{x})\mathbf{d}(\mathbf{x})^{\top}\|}{\|\mathbf{b}(\mathbf{x})\|\|\mathbf{d}(\mathbf{x})\|}= 1\,\,\, \text{and}\,\,\, w(\mathbf{x})\geq 0,
     \end{cases}
\end{split}
\end{equation}
where $w(\mathbf{x})=(a(\mathbf{x})+\kappa \zeta(\mathbf{x})) \mathbf{d}(\mathbf{x})^{\top} \mathbf{d}(\mathbf{x})-(c(\mathbf{x})-\rho \Gamma(\mathbf{x})) \mathbf{b}(\mathbf{x})^{\top} \mathbf{d}(\mathbf{x})$ and $ v(\mathbf{x})=(a(\mathbf{x})+\kappa \zeta(\mathbf{x})) \mathbf{d}(\mathbf{x})^{\top} \mathbf{b}(\mathbf{x})-(\mathbf{c}(\mathbf{x})-\rho \Gamma(\mathbf{x})) \mathbf{b}(\mathbf{x})^{\top} \mathbf{b}(\mathbf{x})$.
\end{Lem}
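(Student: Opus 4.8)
The plan is to read the compatibility definition literally: $V$ and $h$ are compatible at $\mathbf{x}$ exactly when the two affine inequalities \eqref{CLF_Condition} and \eqref{CBF_Condition} admit a common solution $\mathbf{u}\in\mathbb{R}^{m}$. Writing $A(\mathbf{x})=a(\mathbf{x})+\kappa\zeta(\mathbf{x})$ and $C(\mathbf{x})=c(\mathbf{x})-\rho\Gamma(\mathbf{x})$, these read $\mathbf{b}(\mathbf{x})\mathbf{u}\le -A(\mathbf{x})$ and $\mathbf{d}(\mathbf{x})\mathbf{u}\ge -C(\mathbf{x})$, so compatibility is the nonemptiness of the intersection of two closed half-spaces in $\mathbb{R}^{m}$ with boundary normals $\mathbf{b}(\mathbf{x})$ and $\mathbf{d}(\mathbf{x})$. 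Such an intersection can fail to meet only when these normals are collinear, with the half-spaces facing away from one another; hence the entire argument hinges on the linear dependence of $\mathbf{b}(\mathbf{x})$ and $\mathbf{d}(\mathbf{x})$. I would make this precise with Cauchy--Schwarz: since $\mathbf{b}(\mathbf{x})\mathbf{d}(\mathbf{x})^{\top}=\langle\mathbf{b}(\mathbf{x}),\mathbf{d}(\mathbf{x})\rangle$, the ratio $\|\mathbf{b}(\mathbf{x})\mathbf{d}(\mathbf{x})^{\top}\|/(\|\mathbf{b}(\mathbf{x})\|\,\|\mathbf{d}(\mathbf{x})\|)$ equals $1$ precisely when the two vectors are collinear, which cleanly separates the first branch of \eqref{Compatibility_conditions} from the remaining two.

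I would then dispatch the non-degenerate case. When the ratio is strictly below $1$, the vectors $\mathbf{b}(\mathbf{x})$ and $\mathbf{d}(\mathbf{x})$ are linearly independent, so the linear map $\mathbf{u}\mapsto\big(\mathbf{b}(\mathbf{x})\mathbf{u},\,\mathbf{d}(\mathbf{x})\mathbf{u}\big)$ is surjective onto $\mathbb{R}^{2}$. Any target pair of values is therefore attainable, and in particular one can realize a pair lying in the (nonempty) product of the two one-dimensional feasible intervals, producing an explicit $\mathbf{u}$ that meets both \eqref{CLF_Condition} and \eqref{CBF_Condition}. This establishes compatibility with no extra condition and accounts for the first branch of \eqref{Compatibility_conditions}.

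The crux is the collinear case $\mathbf{d}(\mathbf{x})=\alpha\,\mathbf{b}(\mathbf{x})$ with $\alpha\neq 0$. Here both constraints collapse onto the single scalar $t=\mathbf{b}(\mathbf{x})\mathbf{u}$, which ranges over all of $\mathbb{R}$ as $\mathbf{u}$ varies, so compatibility is equivalent to the induced interval for $t$ being nonempty, i.e. to one scalar inequality between the CLF upper bound $-A(\mathbf{x})$ and the CBF bound $-C(\mathbf{x})/\alpha$. Clearing this inequality once by the factor $\|\mathbf{b}(\mathbf{x})\|^{2}>0$ and once by $\langle\mathbf{b}(\mathbf{x}),\mathbf{d}(\mathbf{x})\rangle$ reproduces, up to rearrangement, exactly the quantities $v(\mathbf{x})$ and $w(\mathbf{x})$. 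The subtlety, and the main obstacle, is that $\langle\mathbf{b}(\mathbf{x}),\mathbf{d}(\mathbf{x})\rangle$ carries the sign of $\alpha$: multiplying by it preserves the inequality when the vectors point the same way but reverses it when they are anti-parallel (which is also why the CBF bound on $t$ switches from a lower to an upper bound). Packaging both normalizations as the disjunction ``$v(\mathbf{x})\ge 0$ or $w(\mathbf{x})\ge 0$'' is precisely what repairs this: in the aligned orientation the two quantities share a sign and either one certifies feasibility, whereas in the anti-parallel orientation they carry opposite signs and the disjunction holds automatically, mirroring the geometric fact that oppositely oriented parallel half-spaces always overlap.

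Finally I would assemble the equivalence. The reverse implication is delivered by the explicit constructions above, and the forward implication I would prove by contraposition: if the ratio equals $1$ while the sign conditions on both $v(\mathbf{x})$ and $w(\mathbf{x})$ fail, the interval for $t$ is empty and no admissible $\mathbf{u}$ exists. I expect the remaining algebra to be routine, the only genuine care being the bookkeeping of the sign of $\langle\mathbf{b}(\mathbf{x}),\mathbf{d}(\mathbf{x})\rangle$ across the two collinear sub-cases and the verification that the boundary situation still furnishes a feasible $\mathbf{u}$.
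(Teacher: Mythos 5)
Your skeleton --- two half-spaces, Cauchy--Schwarz to separate the collinear case from the independent one, surjectivity of $\mathbf{u}\mapsto(\mathbf{b}(\mathbf{x})\mathbf{u},\mathbf{d}(\mathbf{x})\mathbf{u})$ when the ratio is below $1$, and reduction to the scalar $t=\mathbf{b}(\mathbf{x})\mathbf{u}$ when $\mathbf{d}(\mathbf{x})=\alpha\,\mathbf{b}(\mathbf{x})$ --- is sound, and it is exactly the hyperplane picture the paper sketches after the lemma (the lemma itself is imported from the cited reference without an internal proof). The gap is in the collinear case, where you settle the sign bookkeeping by assertion instead of computation, and the assertion is false for the definitions of $v$ and $w$ as printed. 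Write $A=a(\mathbf{x})+\kappa\zeta(\mathbf{x})$, $C=c(\mathbf{x})-\rho\Gamma(\mathbf{x})$. For $\alpha>0$ the two constraints read $t\le -A$ and $t\ge -C/\alpha$, which is nonempty iff $\alpha A-C\le 0$; since $v(\mathbf{x})=\|\mathbf{b}(\mathbf{x})\|^{2}(\alpha A-C)$ and $w(\mathbf{x})=\alpha\,v(\mathbf{x})$, feasibility in the aligned sub-case is equivalent to $v(\mathbf{x})\le 0$ (equivalently $w(\mathbf{x})\le 0$), not $v(\mathbf{x})\ge 0$. Hence your claim that in the aligned orientation ``either one certifies feasibility'' and your contrapositive step (``if both sign conditions fail, the interval for $t$ is empty'') are exactly backwards: with $m=1$, $\mathbf{b}=\mathbf{d}=1$, $A=-5$, $C=-1$ one gets $v=w=-4<0$ yet every $u\in[1,5]$ is feasible, while with $A=1$, $C=0$ one gets $v=w=1>0$ yet $u\le -1$, $u\ge 0$ is infeasible. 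Your anti-parallel sub-case conclusion is correct (for $\alpha<0$ both constraints bound $t$ from the same side, so the system is always feasible, and $v,w$ then have opposite signs so either disjunction holds vacuously), but the geometric gloss you attach to it --- ``oppositely oriented parallel half-spaces always overlap'' --- is false as stated; it is the nested, same-side configuration that always overlaps.

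What your computation, carried out honestly, actually establishes is: compatible iff the ratio differs from $1$, or the ratio equals $1$ and $v(\mathbf{x})\le 0$, or the ratio equals $1$ and $w(\mathbf{x})\le 0$ --- the printed condition with the inequalities reversed, the two versions agreeing only on the boundary and in the anti-parallel sub-case. This reversed form is corroborated by the paper's own Theorem~\ref{Compatible_Qp}: $\mathcal{P}_{1}$ demands $v(\mathbf{x})<0$ precisely because the pure CLF control gives $c(\mathbf{x})+\mathbf{d}(\mathbf{x})\mathbf{m}_{\mathrm{Uni-CLF}}^{\star}(\mathbf{x})-\rho\Gamma(\mathbf{x})=-v(\mathbf{x})/\|\mathbf{b}(\mathbf{x})\|^{2}$, so $v\le 0$ is what makes the CLF controller also satisfy the CBF constraint, and symmetrically $\mathcal{P}_{2}$ demands $w(\mathbf{x})<0$. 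So either the statement as printed carries a sign error (in the inequalities of \eqref{Compatibility_conditions} or in the definitions of $v,w$) inherited from the source, or your proof cannot close as written; in either case the collinear branch of your argument must be redone with the explicit scalar computation above rather than with the claim that the disjunction ``repairs'' the sign of $\langle\mathbf{b}(\mathbf{x}),\mathbf{d}(\mathbf{x})\rangle$.
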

The two conditions provided in \eqref{CLF_Condition} and \eqref{CBF_Condition} correspond to two different hyperplanes for each $\mathbf{x}\in\mathbb{R}^{n}$. The three cases described in Lemma~\ref{Compatible_Lemma} are as follows: i) when the two hyperplanes are non-parallel, ii) when the hyperplanes are parallel but the one defined by the CLF is positioned above the one defined by the CBF, and iii) a condition similar to ii) but the other way around.
\begin{Thm}\label{Compatible_Qp}
(\cite{universal_formula}) Assume that the CLF $V(\mathbf{x})$ and CBF $h(\mathbf{x})$ for the model~\eqref{Affine_Control_System} are compatible. The generalized universal formula~\eqref{QP_Control_Law_relaxed} ensures both the safety and stability of~\eqref{Affine_Control_System} simultaneously.
\begin{equation}\label{QP_Control_Law_relaxed}
    \begin{split}
      \mathbf{u}_{\mathrm{Uni}}^{\star}(\mathbf{x})=\left\{\begin{array}{ll}
\mathbf{m}_{\mathrm{Uni-CLF}}^{\star}(\mathbf{x}),&\quad\mathbf{x}\in\mathcal{P}_{1}\\
\mathbf{n}_{\mathrm{Uni-CBF}}^{\star}(\mathbf{x}),&\quad\mathbf{x}\in\mathcal{P}_{2}\\
\mathbf{p}_{\mathrm{Uni}}^{\star}(\mathbf{x}),&\quad\mathbf{x}\in\mathcal{P}_{3}\\
\mathbf{0},&\quad\mathbf{x}\in\mathcal{P}_{4},
\end{array}\right.  
    \end{split}
\end{equation}
where $\mathbf{m}_{\mathrm{Uni-CLF}}^{\star}(\mathbf{x})=-\frac{a(\mathbf{x})+\kappa\zeta(\mathbf{x})}{\mathbf{b}(\mathbf{x})\mathbf{b}(\mathbf{x})^{\top}} \mathbf{b}(\mathbf{x})^{\top}$, $\mathbf{n}_{\mathrm{Uni-CBF}}^{\star}(\mathbf{x})= \frac{\rho\Gamma(\mathbf{x})- c(\mathbf{x})}{\mathbf{d}(\mathbf{x})\mathbf{d}(\mathbf{x})^{\top}} \mathbf{d}(\mathbf{x})^{\top}$,  $\mathbf{p}_{\mathrm{Uni}}^{\star}=\epsilon_{1}\mathbf{m}_{\mathrm{CLF}}^{\star}(\mathbf{x})
+\epsilon_{2}\mathbf{n}_{\mathrm{CBF}}^{\star}(\mathbf{x})$, $\lambda_{2}=-\lambda_{1}\frac{\mathbf{b}(\mathbf{x})\mathbf{b}(\mathbf{x})^{\top}}{a(\mathbf{x})+\kappa\zeta(\mathbf{x})}$, $\epsilon_{3}=\lambda_{2}\frac{\mathbf{d}(\mathbf{x})\mathbf{d}(\mathbf{x})^{\top}}{\rho\Gamma(\mathbf{x})-c(\mathbf{x})}$,
\begin{equation*}\label{bar_lambda_compute}
    \begin{split}
        \begin{bmatrix}
\lambda_{1}\\
\lambda_{2}
\end{bmatrix}={\begin{bmatrix}
\mathbf{b}(\mathbf{x})\mathbf{b}(\mathbf{x})^{\top} & -\mathbf{b}(\mathbf{x})\mathbf{d}(\mathbf{x})^{\top}\\
-\mathbf{d}(\mathbf{x})\mathbf{b}(\mathbf{x})^{\top} & \mathbf{d}(\mathbf{x})\mathbf{d}(\mathbf{x})^{\top}
\end{bmatrix}}^{-1}\begin{bmatrix}
a(\mathbf{x})+\kappa\zeta(\mathbf{x})\\
\rho\Gamma(\mathbf{x})-c(\mathbf{x})
\end{bmatrix},
    \end{split}
\end{equation*}
\begin{equation*}\label{Domain_of_sets_relaxed}
\begin{aligned}
&\mathcal{P}_{1}=\{\mathbf{x}\in\mathbb{R}^{n}|a(\mathbf{x})+\kappa\zeta(\mathbf{x})\geq 0,v(\mathbf{x})< 0\},\\
&\mathcal{P}_{2}=\{\mathbf{x}\in\mathbb{R}^{n}|c(\mathbf{x})-\rho\Gamma(\mathbf{x})\leq 0,w(\mathbf{x})< 0\},\\
&\mathcal{P}_{3}=\{\mathbf{x}\in\mathbb{R}^{n}|w(\mathbf{x})\geq 0,v(\mathbf{x})\geq 0, \\
&\mathbf{b}(\mathbf{x})^{\top} \mathbf{b}(\mathbf{x}) \mathbf{d}(\mathbf{x})^{\top} \mathbf{d}(\mathbf{x})-\mathbf{b}(\mathbf{x})^{\top} \mathbf{d}(\mathbf{x}) \mathbf{b}(\mathbf{x})^{\top} \mathbf{d}(\mathbf{x}) \neq 0\},\\
&\mathcal{P}_{4}=\{\mathbf{x}\in\mathbb{R}^{n}|a(\mathbf{x})+\kappa\zeta(\mathbf{x})< 0,c(\mathbf{x})-\rho\Gamma(\mathbf{x})> 0\}.
\end{aligned}
\end{equation*}
\end{Thm}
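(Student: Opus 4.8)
The plan is to show that the piecewise control $\mathbf{u}^{\star}_{\mathrm{Uni}}$ satisfies the CLF condition~\eqref{CLF_Condition} and the CBF condition~\eqref{CBF_Condition} \emph{pointwise} at every $\mathbf{x}$, from which stability and safety follow by standard arguments. Write $A(\mathbf{x}) := a(\mathbf{x}) + \kappa\zeta(\mathbf{x})$ and $B(\mathbf{x}) := \rho\Gamma(\mathbf{x}) - c(\mathbf{x})$, so that~\eqref{CLF_Condition} reads $\mathbf{b}\mathbf{u} + A \le 0$ and~\eqref{CBF_Condition} reads $\mathbf{d}\mathbf{u} - B \ge 0$. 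Once pointwise satisfaction is established, the CLF inequality gives $\dot V = L_{\mathbf{f}}V + \mathbf{b}\mathbf{u} \le -\lambda V - \kappa\zeta(\mathbf{x})$, which is negative on $\mathbb{R}^n\setminus\{\mathbf{0}\}$ and yields asymptotic stability; the CBF inequality gives $\dot h = L_{\mathbf{f}}h + \mathbf{d}\mathbf{u} \ge -\beta h$, so by the comparison lemma $h(\mathbf{x}(t)) \ge h(\mathbf{x}(0))e^{-\beta t} \ge 0$ whenever $\mathbf{x}(0)\in\mathcal{C}$, i.e., $\mathcal{C}$ is forward invariant. Thus the theorem reduces to a pointwise feasibility check of the four branches.

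Next I would dispatch the three ``single-mode'' regions by direct substitution. On $\mathcal{P}_{1}$ the CLF law $\mathbf{m}^{\star}_{\mathrm{Uni\text{-}CLF}} = -\tfrac{A}{\mathbf{b}\mathbf{b}^{\top}}\mathbf{b}^{\top}$ gives $\mathbf{b}\mathbf{m}^{\star} + A = 0$, so~\eqref{CLF_Condition} holds with equality, while $\mathbf{d}\mathbf{m}^{\star} - B = -v/(\mathbf{b}\mathbf{b}^{\top})$; since $\mathbf{b}\mathbf{b}^{\top}>0$ and $v<0$ on $\mathcal{P}_{1}$, this is positive and~\eqref{CBF_Condition} holds. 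The region $\mathcal{P}_{2}$ is symmetric: $\mathbf{n}^{\star}_{\mathrm{Uni\text{-}CBF}}$ makes~\eqref{CBF_Condition} active, and $\mathbf{b}\mathbf{n}^{\star} + A = w/(\mathbf{d}\mathbf{d}^{\top}) < 0$ because $w<0$ there. On $\mathcal{P}_{4}$, where $A<0$ and $B<0$, the choice $\mathbf{u}=\mathbf{0}$ gives $\mathbf{b}\mathbf{0}+A=A<0$ and $\mathbf{d}\mathbf{0}-B=-B>0$, so both constraints hold trivially.

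The core case is $\mathcal{P}_{3}$, where both constraints must be driven active simultaneously. I interpret $\mathbf{p}^{\star}_{\mathrm{Uni}}$ as the solution of the linear system in the statement: writing it in the span of $\mathbf{b}^{\top},\mathbf{d}^{\top}$ and imposing $\mathbf{b}\mathbf{p}^{\star} = -A$, $\mathbf{d}\mathbf{p}^{\star} = B$ recovers exactly the Gram system whose coefficient matrix has determinant $\det = \mathbf{b}\mathbf{b}^{\top}\mathbf{d}\mathbf{d}^{\top} - (\mathbf{b}\mathbf{d}^{\top})^{2}$. On $\mathcal{P}_{3}$ the non-degeneracy clause guarantees $\det\neq 0$, and by Cauchy–Schwarz $\det>0$, so the system is uniquely solvable and $\mathbf{p}^{\star}$ makes both~\eqref{CLF_Condition} and~\eqref{CBF_Condition} active. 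Solving explicitly gives $\lambda_{1} = w/\det$ and $\lambda_{2} = v/\det$, which are nonnegative precisely because $w\ge0$ and $v\ge0$ on $\mathcal{P}_{3}$; this confirms $\mathbf{p}^{\star}$ is the genuine minimum-norm combination of the two single-mode laws rather than a spurious root. Equivalently, the four branches are exactly the four active-set configurations of the quadratic program $\min_{\mathbf{u}}\|\mathbf{u}\|^{2}$ subject to~\eqref{CLF_Condition}–\eqref{CBF_Condition}, which renders feasibility transparent.

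Finally — and this is the step I expect to be the main obstacle — I must show that $\mathcal{P}_{1},\dots,\mathcal{P}_{4}$ actually \emph{cover} $\mathbb{R}^{n}$, since the per-region computations establish sufficiency only once a state has been assigned a branch. Here the compatibility hypothesis enters through Lemma~\ref{Compatible_Lemma}: a sign analysis of $A,B,v,w$ together with the ratio $\|\mathbf{b}\mathbf{d}^{\top}\|/(\|\mathbf{b}\|\|\mathbf{d}\|)$ shows that every non-degenerate state lands in exactly one region. The delicate part is the parallel case $\|\mathbf{b}\mathbf{d}^{\top}\| = \|\mathbf{b}\|\|\mathbf{d}\|$, where $\det = 0$ and $\mathcal{P}_{3}$ is unavailable: compatibility then forces $v\ge0$ or $w\ge0$ via cases (ii)–(iii) of Lemma~\ref{Compatible_Lemma}, routing the state into $\mathcal{P}_{1}$ or $\mathcal{P}_{2}$; the degenerate cases $\mathbf{b}=\mathbf{0}$ or $\mathbf{d}=\mathbf{0}$ are covered separately by the single-mode formulas~\eqref{CLF_Sontag_Law} and~\eqref{CBF_Universal_Law}. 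Assembling these sub-cases into an exhaustive, non-overlapping partition and checking that the boundary strict/non-strict inequalities are consistent is the bookkeeping-heavy heart of the argument.
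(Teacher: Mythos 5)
Your strategy is the right one, and in fact it is the only derivation on record: the paper states this theorem with a citation to \cite{universal_formula} and gives no in-line proof, though the QP pedigree of the formula is visible in the paper's own proof of Theorem~\ref{Universal_Formula_Relaxed}, where the analogous law is identified as the analytical solution of a minimum-norm program. Your identification of $\mathcal{P}_{1},\dots,\mathcal{P}_{4}$ with the four active-set patterns of $\min_{\mathbf{u}}\|\mathbf{u}\|^{2}$ subject to \eqref{CLF_Condition}--\eqref{CBF_Condition} is exactly that derivation, and your per-region algebra checks out: with $A=a+\kappa\zeta$ and $B=\rho\Gamma-c$ one indeed gets $\mathbf{d}\mathbf{m}^{\star}_{\mathrm{Uni\text{-}CLF}}-B=-v/(\mathbf{b}\mathbf{b}^{\top})$ on $\mathcal{P}_{1}$, $\mathbf{b}\mathbf{n}^{\star}_{\mathrm{Uni\text{-}CBF}}+A=w/(\mathbf{d}\mathbf{d}^{\top})$ on $\mathcal{P}_{2}$, and $\lambda_{1}=w/\det$, $\lambda_{2}=v/\det$ with $\det=\mathbf{b}\mathbf{b}^{\top}\mathbf{d}\mathbf{d}^{\top}-(\mathbf{b}\mathbf{d}^{\top})^{2}>0$ under the nondegeneracy clause, so the strict/non-strict signs in the region definitions are precisely primal feasibility of the inactive constraints plus dual feasibility of the multipliers. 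The comparison-lemma argument for forward invariance and $\dot V\le-\lambda V$ for stability are also fine.

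The flaw sits in the step you yourself flagged as the heart: the parallel case of the covering argument, where your routing claim is internally inconsistent. You assert that compatibility forces $v\ge 0$ or $w\ge 0$ (per cases (ii)--(iii) of Lemma~\ref{Compatible_Lemma}) and that this routes the state into $\mathcal{P}_{1}$ or $\mathcal{P}_{2}$ --- but those regions are gated by the \emph{opposite} strict signs, $v<0$ and $w<0$ respectively, so a state with $v\ge0$ or $w\ge0$ and $\det=0$ is locked out of $\mathcal{P}_{1}$, $\mathcal{P}_{2}$, and $\mathcal{P}_{3}$ simultaneously, and your case analysis collapses exactly where it is needed. A direct computation shows what actually happens: writing $\mathbf{d}=\mu\mathbf{b}$ gives $v=(A\mu+B)\,\mathbf{b}\mathbf{b}^{\top}$ and $w=\mu v$; for $\mu>0$ the half-space constraints $\mathbf{b}\mathbf{u}\le-A$ and $\mathbf{b}\mathbf{u}\ge B/\mu$ are jointly feasible iff $A\mu+B\le 0$, i.e.\ iff $v\le0$ and $w\le0$ (which incidentally suggests a sign slip in the restated lemma itself), so it is the strict negativity of $v$ or $w$ that sends parallel states into $\mathcal{P}_{1}$ or $\mathcal{P}_{2}$. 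Moreover the genuinely degenerate parallel states with $v=w=0$ (e.g.\ $A>0$, $B=-A\mu$, where $\mathbf{b}\mathbf{u}=-A$ satisfies both constraints with equality, so compatibility holds) lie in \emph{none} of the four regions as literally written, since $\mathcal{P}_{1},\mathcal{P}_{2}$ demand strict inequalities, $\mathcal{P}_{3}$ excludes $\det=0$, and $\mathcal{P}_{4}$ demands $A<0$; any complete covering proof must either repair the region boundaries with non-strict inequalities or dispose of this measure-zero set explicitly, which your sketch, even with the signs corrected, does not do.
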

\begin{Rmk}
As we see in~\eqref{QP_Control_Law_relaxed}, the universal formula is a piece-wise function, where certain regions correspond to a specific control law. Generally, the piece-wise function can be rewritten in a more compact form as a combination of $\mathbf{m}_{\mathrm{Uni-CLF}}^{\star}(\mathbf{x})$ and $\mathbf{n}_{\mathrm{Uni-CLF}}^{\star}(\mathbf{x})$, i.e., $\mathbf{u}_{\mathrm{Uni}}^{\star}(\mathbf{x})=\epsilon_{1}^{*}\mathbf{m}_{\mathrm{Uni-CLF}}^{\star}(\mathbf{x})+\epsilon_{2}^{*}\mathbf{n}_{\mathrm{Uni-CLF}}^{\star}(\mathbf{x})$, where 
\begin{equation*}\label{Parameters}
    \begin{split}
     \left\{\begin{array}{ll}
      \epsilon_{1}^{*}=1,\,\, \epsilon_{2}^{*}=0&\quad\mathbf{x}\in\mathcal{P}_{1}\\
\epsilon_{1}^{*}=0, \,\,\epsilon_{2}^{*}=1&\quad\mathbf{x}\in\mathcal{P}_{2}\\
\epsilon_{1}^{*}=\epsilon_{1}, \epsilon_{2}^{*}=\epsilon_{2}&\quad\mathbf{x}\in\mathcal{P}_{3}\\
\epsilon_{1}^{*}=0, \,\,\epsilon_{2}^{*}=0,&\quad\mathbf{x}\in\mathcal{P}_{4}.
\end{array}\right.  
    \end{split}
\end{equation*}
However, this is not the only approach to derive a universal formula. As demonstrated in~\cite{Smooth_universal}, an alternative method involves employing a smooth function to define and determine the parameters $\epsilon_{1}^{*}$ and $\epsilon_{2}^{*}$. Many more universal formulas can be constructed by carefully selecting the parameters $\epsilon_{1}^{*}$ and $\epsilon_{2}^{*}$, which will result in different closed-loop control properties.
\end{Rmk}
\subsection{Gaussian Process Regression}
Consider a dynamic model as follows.
\begin{equation}\label{Affine_Control_System_Unknown}
\dot{\mathbf{x}}=\mathbf{f}(\mathbf{x})+\mathbf{g}(\mathbf{x})\mathbf{u}+\bm{\omega}(\mathbf{x}),
\end{equation}
where $\bm{\omega}(\mathbf{x})$ is Lipshitz continuous and denotes the unknown, but deterministic, state-dependent disturbance capturing unmodeled dynamics. 
\begin{Rmk}
    There are many different choices for defining a dynamic model with unknown dynamics, and the specific settings employed vary based on different considerations. For instance, in~\cite{Fully_Unknown_Dynamics}, the authors are interested in the model~\eqref{Affine_Control_System}, where both $\mathbf{f}(\mathbf{x})$ and $\mathbf{g}(\mathbf{x})$ are unknown. However, some papers focus on scenarios where only $\mathbf{f}(\mathbf{x})$ is fully unknown while $\mathbf{g}(\mathbf{x})$ is available, as seen in~\cite{Dynamical_model}. In other cases, such as~\cite{GP_solution2}, a nominal model of the form $\dot{\mathbf{x}}=\tilde{\mathbf{f}}(\mathbf{x})+\tilde{\mathbf{g}}(\mathbf{x})\mathbf{u}$ is provided, where $\tilde{\mathbf{f}}(\mathbf{x})\neq\mathbf{f}(\mathbf{x})$ and $\tilde{\mathbf{g}}(\mathbf{x})\neq\mathbf{g}(\mathbf{x})$. In contrast, this paper focuses on the system defined by \eqref{Affine_Control_System_Unknown}, where $\mathbf{f}(\mathbf{x})$ and $\mathbf{g}(\mathbf{x})$ are perfectly known while an unknown term $\bm{\omega}(\mathbf{x})$ is introduced. An illustrative example of such a consideration can be found in Section~\ref{Application_study}, where $\bm{\omega}(\mathbf{x})$ indicates the mismatch in model parameters.
\end{Rmk}

We can learn the function $\bm{\omega}(\mathbf{x})$ from past data by formulating a supervised learning problem.
GP is a non-parametric regression method, where the goal is to find an approximation of unknown nonlinear dynamics $\bm{\omega}: \mathbb{R}^{n}\rightarrow\mathbb{R}^{n}$ with $n$ independent GPs given as
\begin{equation}
    \bm{\omega}(\mathbf{x})=\left\{\begin{array}{c}
\bm{\omega}_1(\mathbf{x}) \sim \mathcal{GP}\left(\mathbf{m}_{1}(\mathbf{x}), \mathbf{k}_1\left(\mathbf{x}, \mathbf{x}^{\prime}\right)\right), \\
\vdots \\
\bm{\omega}_n(\mathbf{x}) \sim \mathcal{GP}\left(\mathbf{m}_{n}(\mathbf{x}), \mathbf{k}_n\left(\mathbf{x}, \mathbf{x}^{\prime}\right)\right),
\end{array}\right.
\end{equation}
where $\mathbf{m}_{i}(\mathbf{x}), i=1,\cdots n$ is a mean function, $\mathbf{k}_i\left(\mathbf{x}, \mathbf{x}^{\prime}\right)$ is a covariance function (a.k.a., a symmetric positive definite function referred to as kernel). The most commonly used kernels include the linear, squared exponential, and \text{Matèrr} kernels. In a function space view, such $k(\mathbf{x},\mathbf{x}^{\prime})$ prescribes a specific class of Hilbert space called reproducing kernel Hilbert space (RKHS,~\cite{Kernel_Functions}), denoted as $\mathcal{H}_{k}(\mathcal{X})$, where $\mathcal{X}$ is the domain of the kernel function, a connected subset of $\mathbb{R}^{n}$. Moreover, the RKHS norm $\|h\|_{k}:=\sqrt{\langle h,h\rangle_{\mathcal{H}_{k}(\mathcal{X})}}$ is a measure of its smoothness with respect to the kernel function.
\begin{Asum}\label{Data_Collection}
(\cite{Armin_paper})
We assume that we have access to the state  $\mathbf{x}$  of~\eqref{Affine_Control_System_Unknown}, and thus the unmodeled dynamics can be measured. Therefore, the data set $\mathbb{D}$, which is composed of $Q$ pairs of $\mathbf{x}^{(q)}$ and $\mathbf{y}^{(q)}$, is available  
\begin{equation}\label{Data_set}
\mathbb{D}=\left\{\mathbf{x}^{(q)}, \mathbf{y}^{(q)}=\bm{\omega}\left(\mathbf{x}^{(q)}\right)+\bm{\varepsilon}^{(q)}\right\}_{q=1}^Q.
\end{equation}
Herein, $\mathbf{x}^{(q)}, q=1,\cdots N$ is a sample of noiseless measurement of the state $\mathbf{x}$, $\bm{\varepsilon}^{(q)}\sim\mathcal{N}(\mathbf{0}_{n},\sigma_{\bm{\varepsilon}}^{2}\mathbf{I}_{n})$ is a sample of an additive noise, and $\mathbf{y}^{(q)}$ is the measurement of $\bm{\omega}(\mathbf{x}^{(q)})$.
\end{Asum}
Assuming that Assumption~\ref{Data_Collection} holds, the posterior distribution for $\bm{\omega}_{i}(\mathbf{x})$ at a query point $\mathbf{x}_{*}$ is calculated as a Gaussian distribution $\mathcal{N}(\mathbf{m}_{i}(\mathbf{x}),\bm{\sigma}_{i}^{2}(\mathbf{x}))$ with the following mean and covariance:
\begin{equation}\label{mean_function}
    \mathbf{m}_{i}(\mathbf{x})=\mathbf{k}_{i,*}^{\top}(\mathbf{K}_{i}+\sigma_{\bm{\varepsilon}}^{2}\mathbf{I}_{N})^{-1}\mathbf{y}_{i},
\end{equation}
\begin{equation}\label{covariance_function}
    \bm{\sigma}_{i}^{2}(\mathbf{x})=\mathbf{k}_{i}(\mathbf{x},\mathbf{x})-\mathbf{k}_{i,*}^{\top}(\mathbf{K}_{i}+\sigma_{\bm{\varepsilon}}^{2}\mathbf{I}_{N})^{-1}\mathbf{k}_{i,*}.
\end{equation}
where $\mathbf{k}_{i,*}=\left[k_{i}(\mathbf{x}^{(1)},\mathbf{x}_{*})\cdots,k_{i}(\mathbf{x}^{(Q)},\mathbf{x}_{*})\right]^{\top}\in\mathbb{R}^{N}$, 
\begin{equation*}
    \mathbf{K}_{i}=\left[\begin{array}{ccc}
\mathbf{k}_i\left(\mathbf{x}^{(1)}, \mathbf{x}^{(1)}\right) & \cdots & \mathbf{k}_i\left(\mathbf{x}^{(1)}, \mathbf{x}^{(N)}\right) \\
\vdots & \ddots & \vdots \\
\mathbf{k}_i\left(\mathbf{x}^{(N)}, \mathbf{x}^{(1)}\right) & \cdots & \mathbf{k}_i\left(\mathbf{x}^{(N)}, \mathbf{x}^{(N)}\right)
\end{array}\right] \in \mathbb{R}^{N \times N} .
\end{equation*}
The approximation of overall $\bm{\omega}(\mathbf{x})$ can be obtained by concatenating $\mathbf{m}_{i}(\mathbf{x})$ and $\bm{\sigma}_{i}^{2}(\mathbf{x})$ as follows.
\begin{equation}\label{mean_and_covariance_functions}
\begin{split}
    \mathbf{m}(\mathbf{x})&=\left[\mathbf{m}_{1}(\mathbf{x}),\cdots,\mathbf{m}_{N}(\mathbf{x})\right]^{\top}, \\
    \bm{\sigma}^{2}(\mathbf{x})&=\left[\bm{\sigma}_{1}^{2}(\mathbf{x}),\cdots,\bm{\sigma}_{n}^{2}(\mathbf{x})\right].
\end{split}
\end{equation}
With a prescribed probability $\delta$, the difference between the actual value of $\bm{\omega}_{i}(\mathbf{x})$ and the estimated mean $\mathbf{m}_{i}$ can be  bounded, as indicated in the following lemma.
\begin{Lem}\label{Bound_Learning_Error}
    Assume the function of the unmodeled dynamics $\bm{\omega}(\mathbf{x})$ has a finite RKHS norm with respect to the kernel $k$, i.e., $\|\bm{\omega}_{i}\|_{k}\leq\infty, \forall i=1,\cdots, n$. For any $\mathbf{x}\in\mathbb{R}^{n}$, there holds
\begin{small}
\begin{equation}\label{Bound_Condition}
    \begin{split}
    &\mathbb{P}\left\{\bigcap_{i=1}^n\mathbf{m}_{i}(\mathbf{x})-\beta_{i}\bm{\sigma}_{i}(\mathbf{x})\leq\omega_{i}(\mathbf{x})\leq\mathbf{m}_{i}(\mathbf{x})+\beta_{i}\bm{\sigma}_{i}(\mathbf{x})\right\}\\
    &\qquad\qquad\qquad\qquad\qquad\qquad\qquad\qquad\qquad\geq (1-\delta)^{n},
    \end{split}
\end{equation}
\end{small}

\noindent
with $\delta\in(0,1)$ and $\beta_{i}=\sqrt{2\|\bm{\omega}_{i}\|_{k}^{2}+300\gamma_{i}\ln^{3}\left(\frac{Q+1}{\delta}\right)}$, where $\gamma_{i}\in\mathbb{R}$ represents the information gain, and its selection can follow the guidelines outlined in~\cite{Gamma_Selection}.
\end{Lem}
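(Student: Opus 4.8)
The plan is to reduce the $n$-dimensional high-probability inclusion to the well-established scalar GP concentration bound applied componentwise, and then to assemble the joint guarantee using the mutual independence of the $n$ scalar GPs. The exponent $(1-\delta)^n$ on the right-hand side of~\eqref{Bound_Condition} is the tell-tale sign that independence, rather than a union bound, is the mechanism at work.

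First I would invoke the single-output result. For each index $i$ the component $\bm{\omega}_i$ lies in the RKHS $\mathcal{H}_k(\mathcal{X})$ with finite norm $\|\bm{\omega}_i\|_k<\infty$, and the measurement noise $\bm{\varepsilon}^{(q)}$ is i.i.d.\ zero-mean Gaussian with covariance $\sigma_{\bm{\varepsilon}}^2\mathbf{I}_n$. Under exactly these hypotheses the information-theoretic concentration bound for kernelized regression (the scalar result underlying the information-gain guidance in~\cite{Gamma_Selection}) guarantees that, with the precise choice $\beta_i=\sqrt{2\|\bm{\omega}_i\|_k^2+300\gamma_i\ln^3((Q+1)/\delta)}$, the event
\begin{equation*}
E_i:=\left\{\,\mathbf{m}_i(\mathbf{x})-\beta_i\bm{\sigma}_i(\mathbf{x})\leq\omega_i(\mathbf{x})\leq\mathbf{m}_i(\mathbf{x})+\beta_i\bm{\sigma}_i(\mathbf{x})\,\right\}
\end{equation*}
satisfies $\mathbb{P}(E_i)\geq 1-\delta$ for the given $\mathbf{x}$. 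The posterior mean $\mathbf{m}_i$ and standard deviation $\bm{\sigma}_i$ entering $E_i$ are precisely those produced by the componentwise formulas~\eqref{mean_function} and~\eqref{covariance_function}, so no recomputation is needed; I would only verify that the $\gamma_i$ used here coincides with the information-gain quantity appearing in that scalar theorem and that the $\delta$-dependence of $\beta_i$ is the one that yields the per-component confidence $1-\delta$.

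Second I would combine the components. By construction the unknown map is modeled by $n$ \emph{independent} GPs, and the additive noise has the diagonal covariance $\sigma_{\bm{\varepsilon}}^2\mathbf{I}_n$, so the $n$ regression problems, hence the posteriors and the events $E_1,\dots,E_n$, are mutually independent. Therefore the probability of the simultaneous inclusion factorizes,
\begin{equation*}
\mathbb{P}\!\left(\bigcap_{i=1}^n E_i\right)=\prod_{i=1}^n\mathbb{P}(E_i)\geq\prod_{i=1}^n(1-\delta)=(1-\delta)^n,
\end{equation*}
which is exactly~\eqref{Bound_Condition}.

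The substantive content is carried entirely by the scalar bound, which is the deep, information-theoretic step; since it is imported from the cited literature, the remaining work is light. The one point demanding genuine care is the independence claim: it must be justified from the modeling assumptions (separate per-dimension kernels and a diagonal noise covariance) so that the factorization above is legitimate. Were the components coupled, the honest conclusion would degrade to the union-bound guarantee $1-n\delta$ rather than $(1-\delta)^n$, so I would state the independence hypothesis explicitly before multiplying.
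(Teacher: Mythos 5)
Your proof is correct; note that the paper states Lemma~\ref{Bound_Learning_Error} without any proof of its own, importing it from~\cite{Gamma_Selection}, and your two-step argument (the scalar RKHS concentration bound applied per component with the stated $\beta_{i}$, then factorization of the joint probability over the $n$ components) is exactly the standard derivation behind that cited result. Your emphasis on justifying independence is well placed and correctly resolved: each event $E_{i}$ is measurable with respect to the noise coordinates $\{\bm{\varepsilon}_{i}^{(q)}\}_{q=1}^{Q}$ alone, since the posterior variance in~\eqref{covariance_function} is deterministic and the regression for $\bm{\omega}_{i}$ uses only $\mathbf{y}_{i}$, so the diagonal covariance $\sigma_{\bm{\varepsilon}}^{2}\mathbf{I}_{n}$ together with the $n$ independent GP priors makes $E_{1},\dots,E_{n}$ mutually independent and licenses the product $(1-\delta)^{n}$ rather than the weaker union bound $1-n\delta$.
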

\subsection{Problem Statement}
The CLF and CBF constraints for the dynamical system~\eqref{Affine_Control_System_Unknown} are given as follows
\begin{equation}\label{CLF_CBF_Condition}
\begin{split}
&a(\mathbf{x})+\mathbf{b}(\mathbf{x})\mathbf{u}+\frac{\partial V(\mathbf{x})}{\partial \mathbf{x}}\bm{\omega}(\mathbf{x})\leq -\kappa\zeta(\mathbf{x}),\\
&c(\mathbf{x})+\mathbf{d}(\mathbf{x})\mathbf{u}+\frac{\partial h(\mathbf{x})}{\partial \mathbf{x}}\bm{\omega}(\mathbf{x})\geq\rho\Gamma(\mathbf{x}).
\end{split}
\end{equation}
The goal of this paper is to find a universal formula as in~\eqref{QP_Control_Law_relaxed} that satisfies both of the CLF and CBF constraints in~\eqref{CLF_CBF_Condition}. However, due to the existence of unknown dynamics in~\eqref{Affine_Control_System_Unknown}, i.e., $\bm{\omega}(\mathbf{x})$, the derivation of a universal formula is unattainable. To address this problem, we propose to use GP learning to approximate $\bm{\omega}(\mathbf{x})$ and combine these learned results with a universal formula. Using the dataset given in~\eqref{Data_set} and employing GP regression methods to learn the unmodeled dynamics $\bm{\omega}(\mathbf{x})$ in~\eqref{Affine_Control_System_Unknown}, we can approximate $\bm{\omega}(\mathbf{x})$ with the mean and covariance functions provided in~\eqref{mean_and_covariance_functions}. By considering the learning errors, we aim to ensure a high probability of achieving safety and stability. Formally, the problem can be stated as follows.

\textit{Problem:}  In this paper, the research problem is to find a universal formula that incorporates the learned results (mean functions and covariance functions) to ensure that the constraints given in ~\eqref{CLF_CBF_Condition} are satisfied with a high probability, as shown in Lemma~\ref{Bound_Learning_Error}.
\begin{Rmk}
    Instead of employing GP regression to estimate the model uncertainty term $\bm{\omega}(\mathbf{x})$, an alternative approach can also be employed, as demonstrated in~\cite{machine_learning_CBF,GP_solution2}. This approach aims to learn the terms $\frac{\partial V(\mathbf{x})}{\partial \mathbf{x}}\bm{\omega}(\mathbf{x})$ and $\frac{\partial h(\mathbf{x})}{\partial \mathbf{x}}\bm{\omega}(\mathbf{x})$ rather than $\bm{\omega}(\mathbf{x})$. One notable advantage of the latter approach is that the dimensions of $\frac{\partial V(\mathbf{x})}{\partial \mathbf{x}}\bm{\omega}(\mathbf{x})$ and $\frac{\partial V(\mathbf{x})}{\partial \mathbf{x}}\bm{\omega}(\mathbf{x})$ are both one, which is considerably lower than the dimension of the uncertainty term $\bm{\omega}(\mathbf{x})$. This reduction in dimensionality results in a simplified GP learning process, reducing its computational complexity. In contrast, the advantage of directly learning $\bm{\omega}(\mathbf{x})$ lies in the fact that $\bm{\omega}(\mathbf{x})$ maintains the information of the model. Learning $\bm{\omega}(\mathbf{x})$ provides an interpretable representation, making it easier to understand the behavior of the model uncertainty and its influence on stability and safety.
\end{Rmk}
\section{Universal Formula with Gaussian Processes}
Based on Lemma~\ref{Bound_Learning_Error}, it becomes evident that the error in estimating the unknown dynamics $\bm{\omega}(\mathbf{x})$ via GP learning is bounded with a probability of $(1-\delta)^{n}$. Under the conditions outlined in~\eqref{CLF_CBF_Condition}, the following lemma is provided.
\begin{Lem}\label{GP_Sufficient}
Suppose the following conditions hold true for all $\mathbf{x}\in\mathcal{D}\subset\mathbb{R}^{n}$
\begin{equation}\label{CLF_CBF_GP_Condition}
\begin{split}
&\tilde{a}(\mathbf{x})+\mathbf{b}(\mathbf{x})\mathbf{u}\leq -\kappa\zeta(\mathbf{x}),\quad\tilde{c}(\mathbf{x})+\mathbf{d}(\mathbf{x})\mathbf{u}\geq \rho\Gamma(\mathbf{x}),
\end{split}
\end{equation}
where $\tilde{a}(\mathbf{x})=a(\mathbf{x})+\Delta_{V}$, $\tilde{c}(\mathbf{x})=c(\mathbf{x})+\Delta_{h}$, $\Delta_{V}=\frac{\partial V(\mathbf{x})}{\partial \mathbf{x}}\mathbf{m}(\mathbf{x})+\beta\left\|\frac{\partial V(\mathbf{x})}{\partial \mathbf{x}}\right\|\bm{\sigma}(\mathbf{x})$, and $\Delta_{h}=\frac{\partial h(\mathbf{x})}{\partial \mathbf{x}}\mathbf{m}(\mathbf{x})-\beta\left\|\frac{\partial h(\mathbf{x})}{\partial \mathbf{x}}\right\|\bm{\sigma}(\mathbf{x})$. Then the constraints given in \eqref{CLF_CBF_Condition} are satisfied with a probability of $(1-\delta)^{n}$.
\end{Lem}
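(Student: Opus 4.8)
The plan is to condition on the single high-probability confidence event supplied by Lemma~\ref{Bound_Learning_Error} and to show that, on that event, the surrogate conditions~\eqref{CLF_CBF_GP_Condition} deterministically force the true conditions~\eqref{CLF_CBF_Condition}. Concretely, let $\mathcal{E}$ denote the event
\[
\mathcal{E}=\bigcap_{i=1}^n\left\{\mathbf{m}_{i}(\mathbf{x})-\beta_{i}\bm{\sigma}_{i}(\mathbf{x})\leq\omega_{i}(\mathbf{x})\leq\mathbf{m}_{i}(\mathbf{x})+\beta_{i}\bm{\sigma}_{i}(\mathbf{x})\right\},
\]
so that $\mathbb{P}(\mathcal{E})\geq(1-\delta)^{n}$ by Lemma~\ref{Bound_Learning_Error}. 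Everything after this is deterministic: I would first bound the two unknown inner products $\frac{\partial V}{\partial\mathbf{x}}\bm{\omega}$ and $\frac{\partial h}{\partial\mathbf{x}}\bm{\omega}$ by $\Delta_V$ and $\Delta_h$ respectively, and then chain these bounds with the assumed inequalities in~\eqref{CLF_CBF_GP_Condition}.

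For the stability term I would split $\frac{\partial V}{\partial\mathbf{x}}\bm{\omega}=\frac{\partial V}{\partial\mathbf{x}}\mathbf{m}+\frac{\partial V}{\partial\mathbf{x}}(\bm{\omega}-\mathbf{m})$ and estimate the residual by Cauchy--Schwarz, $\frac{\partial V}{\partial\mathbf{x}}(\bm{\omega}-\mathbf{m})\leq\|\frac{\partial V}{\partial\mathbf{x}}\|\,\|\bm{\omega}-\mathbf{m}\|$; on $\mathcal{E}$ the componentwise bounds $|\omega_{i}-\mathbf{m}_{i}|\leq\beta_{i}\bm{\sigma}_{i}$ control $\|\bm{\omega}-\mathbf{m}\|$ by $\beta\,\bm{\sigma}(\mathbf{x})$, so that $\frac{\partial V}{\partial\mathbf{x}}\bm{\omega}\leq\Delta_V$. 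The identical computation with the sign reversed (i.e.\ taking the worst case in the opposite direction) yields $\frac{\partial h}{\partial\mathbf{x}}\bm{\omega}\geq\Delta_h$. Substituting these into~\eqref{CLF_CBF_Condition} and invoking the assumed surrogate inequalities $\tilde a+\mathbf{b}\mathbf{u}=a+\Delta_V+\mathbf{b}\mathbf{u}\leq-\kappa\zeta$ and $\tilde c+\mathbf{d}\mathbf{u}=c+\Delta_h+\mathbf{d}\mathbf{u}\geq\rho\Gamma$ closes both constraints, since $a+\mathbf{b}\mathbf{u}+\frac{\partial V}{\partial\mathbf{x}}\bm{\omega}\leq a+\mathbf{b}\mathbf{u}+\Delta_V\leq-\kappa\zeta$ and symmetrically for the CBF constraint.

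The step I expect to require the most care is the probabilistic bookkeeping rather than the algebra. Both the CLF bound $\Delta_V$ and the CBF bound $\Delta_h$ are consequences of the \emph{same} event $\mathcal{E}$, which simultaneously controls every component of $\bm{\omega}$; hence the two true constraints in~\eqref{CLF_CBF_Condition} hold \emph{jointly} on $\mathcal{E}$, and no additional union bound over the two constraints is incurred. This is precisely why the success probability stays at $(1-\delta)^{n}$ rather than degrading to $(1-\delta)^{2n}$. The remaining subtlety is the sign convention: the ``$+$'' branch of the confidence interval must be used to over-estimate the CLF perturbation and the ``$-$'' branch to under-estimate the CBF perturbation, matching the definitions of $\Delta_V$ and $\Delta_h$; a mismatch here would break the monotone chaining of one of the two inequalities.
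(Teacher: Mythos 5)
Your proof is correct and takes essentially the same route as the paper: both invoke Lemma~\ref{Bound_Learning_Error} to bound $\frac{\partial V}{\partial \mathbf{x}}\bm{\omega}(\mathbf{x})$ above by $\Delta_{V}$ and $\frac{\partial h}{\partial \mathbf{x}}\bm{\omega}(\mathbf{x})$ below by $\Delta_{h}$, and then chain these with the surrogate inequalities~\eqref{CLF_CBF_GP_Condition} to recover~\eqref{CLF_CBF_Condition}. If anything you are slightly more careful than the paper's terse argument: you make the Cauchy--Schwarz step explicit, you note that both constraints hold \emph{jointly} on the single confidence event $\mathcal{E}$ so no union-bound degradation of the probability occurs (the paper establishes each constraint at probability $(1-\delta)^{n}$ separately), and you state the bound in the correct direction, $\frac{\partial V}{\partial \mathbf{x}}\bm{\omega}(\mathbf{x})\leq \Delta_{V}$, whereas the paper's displayed inequality $\Delta_{V}\leq \frac{\partial V(\mathbf{x})}{\partial \mathbf{x}}\bm{\omega}(\mathbf{x})$ is evidently a typo for the reverse.
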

\begin{proof}
Based on the condition~\eqref{Bound_Condition} provided in Lemma~\ref{Bound_Learning_Error},  we can infer that $\mathbb{P}\left\{\Delta_{V}\leq \frac{\partial V(\mathbf{x})}{\partial \mathbf{x}}\bm{\omega}(\mathbf{x})\right\}\geq (1-\delta)^{n}$ irrespective of whether  $\frac{\partial V(\mathbf{x})}{\partial \mathbf{x}}\geq 0$ or $\frac{\partial V(\mathbf{x})}{\partial \mathbf{x}}<0$. Consequently, we can ensure that $a(\mathbf{x})+\mathbf{b}(\mathbf{x})\mathbf{u}+\frac{\partial V(\mathbf{x})}{\partial \mathbf{x}}\bm{\omega}(\mathbf{x})\leq  -\kappa\zeta(\mathbf{x})$ with a probability $(1-\delta)^{n}$ if $\tilde{a}(\mathbf{x})+\mathbf{b}(\mathbf{x})\mathbf{u}\leq  -\kappa\zeta(\mathbf{x})$ is satisfied. Employing a similar proof process, it can be established that the CBF condition specified in~\eqref{CLF_CBF_Condition} also holds with a probability of $(1-\delta)^{n}$ when $\tilde{c}(\mathbf{x})+\mathbf{d}(\mathbf{x})\mathbf{u}\geq \rho\Gamma(\mathbf{x})$. 
\end{proof}
As the conditions given in~\eqref{CLF_CBF_GP_Condition} for ensuring safety and stability have been modified (compared to~\eqref{CLF_Condition} and~\eqref{CBF_Condition}), it is necessary to reassess the compatibility of these conditions to determine if safe stabilization remains feasible. For this consideration, we have the following lemma.
\begin{Lem}\label{Compatible}
Assume that both $\mathbf{b}(\mathbf{x})$ and $\mathbf{d}(\mathbf{x})$ are non-zero vectors, and the compatibility of the CLF $V(\mathbf{x})$ and CBF $h(\mathbf{x})$ w.r.t. the model~\eqref{Affine_Control_System} is established. The compatibility of the CLF $V(\mathbf{x})$ and CBF $h(\mathbf{x})$ for the model~\eqref{Affine_Control_System_Unknown} can be achieved with a probability of $(1-\delta)^{n}$ if the following conditions are satisfied 
\begin{equation}\label{CLF_compatible}
\begin{split}
    &\Delta_{V}\mathbf{d}(\mathbf{x})^{\top} \mathbf{b}(\mathbf{x})-\Delta_{h}\mathbf{b}(\mathbf{x})^{\top} \mathbf{b}(\mathbf{x})\geq 0.
\end{split}
\end{equation}
\begin{equation}\label{CBF_compatible}
      \Delta_{V}\mathbf{d}(\mathbf{x})^{\top} \mathbf{d}(\mathbf{x})-\Delta_{h}\mathbf{b}(\mathbf{x})^{\top} \mathbf{d}(\mathbf{x})\geq 0.
\end{equation}
\end{Lem}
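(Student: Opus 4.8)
The plan is to reduce the claim to the deterministic compatibility criterion of Lemma~\ref{Compatible_Lemma}, applied not to the nominal pair $(a,c)$ but to the GP-inflated pair $(\tilde{a},\tilde{c})$ appearing in the modified conditions~\eqref{CLF_CBF_GP_Condition}, and then to lift the resulting deterministic statement to a probabilistic one through Lemma~\ref{GP_Sufficient}. First I would observe that compatibility of $V$ and $h$ for the uncertain model~\eqref{Affine_Control_System_Unknown} amounts to the existence of a single $\mathbf{u}$ satisfying both constraints in~\eqref{CLF_CBF_Condition}. By Lemma~\ref{GP_Sufficient}, any $\mathbf{u}$ that satisfies the surrogate constraints~\eqref{CLF_CBF_GP_Condition} satisfies~\eqref{CLF_CBF_Condition} with probability at least $(1-\delta)^{n}$; hence it suffices to show that~\eqref{CLF_CBF_GP_Condition} is itself compatible, i.e.\ admits a common feasible $\mathbf{u}$. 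Since~\eqref{CLF_CBF_GP_Condition} has exactly the same structure as~\eqref{CLF_Condition}--\eqref{CBF_Condition} with $a,c$ replaced by $\tilde{a},\tilde{c}$ and with the same $\mathbf{b},\mathbf{d}$, Lemma~\ref{Compatible_Lemma} applies verbatim, and feasibility is governed by the analogues $\tilde{v},\tilde{w}$ of $v,w$.

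The key algebraic step, which is where the two hypotheses~\eqref{CLF_compatible}--\eqref{CBF_compatible} enter, is to expand these analogues. Writing $\tilde{a}=a+\Delta_{V}$ and $\tilde{c}=c+\Delta_{h}$ and substituting into the definitions of $v,w$ from Lemma~\ref{Compatible_Lemma}, I expect the decompositions $\tilde{v}(\mathbf{x})=v(\mathbf{x})+\bigl(\Delta_{V}\mathbf{d}(\mathbf{x})^{\top}\mathbf{b}(\mathbf{x})-\Delta_{h}\mathbf{b}(\mathbf{x})^{\top}\mathbf{b}(\mathbf{x})\bigr)$ and $\tilde{w}(\mathbf{x})=w(\mathbf{x})+\bigl(\Delta_{V}\mathbf{d}(\mathbf{x})^{\top}\mathbf{d}(\mathbf{x})-\Delta_{h}\mathbf{b}(\mathbf{x})^{\top}\mathbf{d}(\mathbf{x})\bigr)$, in which the parenthesised correction terms are \emph{precisely} the left-hand sides of~\eqref{CLF_compatible} and~\eqref{CBF_compatible}. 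Consequently the two hypotheses are exactly the statements $\tilde{v}\geq v$ and $\tilde{w}\geq w$. I would also note that the ratio $\|\mathbf{b}(\mathbf{x})\mathbf{d}(\mathbf{x})^{\top}\|/(\|\mathbf{b}(\mathbf{x})\|\|\mathbf{d}(\mathbf{x})\|)$ that selects the three cases of~\eqref{Compatibility_conditions} depends only on $\mathbf{b},\mathbf{d}$, which are identical for~\eqref{Affine_Control_System} and~\eqref{Affine_Control_System_Unknown}; thus the case distinction (parallel versus non-parallel hyperplanes) is unaffected by the GP correction.

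With these two facts the argument closes by a short case analysis on the nominal compatibility guaranteed by hypothesis. If the ratio is not equal to $1$, the first branch of~\eqref{Compatibility_conditions} holds for the surrogate problem automatically, since it involves only $\mathbf{b},\mathbf{d}$. If the ratio equals $1$, nominal compatibility forces $v\geq 0$ or $w\geq 0$; combining with $\tilde{v}\geq v$ and $\tilde{w}\geq w$ yields $\tilde{v}\geq 0$ or $\tilde{w}\geq 0$, so the corresponding branch of~\eqref{Compatibility_conditions} holds for $(\tilde{a},\tilde{c})$ as well. In every case~\eqref{CLF_CBF_GP_Condition} is compatible, and invoking Lemma~\ref{GP_Sufficient} upgrades this to compatibility of~\eqref{CLF_CBF_Condition} with probability $(1-\delta)^{n}$. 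The main obstacle is the bookkeeping in the expansion of $\tilde{v},\tilde{w}$: one must verify that the $\Delta$-dependent remainders collapse exactly onto~\eqref{CLF_compatible}--\eqref{CBF_compatible} with no residual terms, and that the inequality directions involving $\kappa\zeta(\mathbf{x})$ and $\rho\Gamma(\mathbf{x})$ are preserved under the substitution; once this is confirmed, the probabilistic lifting is immediate.
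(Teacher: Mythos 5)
Your proposal is correct and follows essentially the same route as the paper's own proof: the identical decompositions $\tilde{v}(\mathbf{x})=v(\mathbf{x})+\Delta_{V}\mathbf{d}(\mathbf{x})^{\top}\mathbf{b}(\mathbf{x})-\Delta_{h}\mathbf{b}(\mathbf{x})^{\top}\mathbf{b}(\mathbf{x})$ and $\tilde{w}(\mathbf{x})=w(\mathbf{x})+\Delta_{V}\mathbf{d}(\mathbf{x})^{\top}\mathbf{d}(\mathbf{x})-\Delta_{h}\mathbf{b}(\mathbf{x})^{\top}\mathbf{d}(\mathbf{x})$, the same case analysis over the three branches of~\eqref{Compatibility_conditions} (with the parallelism ratio unchanged since it involves only $\mathbf{b},\mathbf{d}$), and the same final lifting via Lemma~\ref{GP_Sufficient}. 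The bookkeeping you flag does collapse with no residual terms, precisely because $\zeta(\mathbf{x})$ and $\Gamma(\mathbf{x})$ are kept unmodified in~\eqref{CLF_CBF_GP_Condition}.
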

\begin{proof}
    Since the CLF $V(\mathbf{x})$ and CBF $h(\mathbf{x})$ w.r.t. the model~\eqref{Affine_Control_System} is compatible, we know that one of the conditions provided in~\eqref{Compatibility_conditions} has to be satisfied. For the case that $\frac{\|\mathbf{b}(\mathbf{x})\mathbf{d}(\mathbf{x})^{\top}\|}{\|\mathbf{b}(\mathbf{x})\|\|\mathbf{d}(\mathbf{x})\|}= 1\,\,\, \text{and}\,\,\, v(\mathbf{x})\geq 0$, we will obtain $
        \tilde{v}(\mathbf{x})=v(\mathbf{x})+\Delta_{V}\mathbf{d}(\mathbf{x})^{\top} \mathbf{b}(\mathbf{x})-\Delta_{h}\mathbf{b}(\mathbf{x})^{\top} \mathbf{b}(\mathbf{x})\geq 0$ if~\eqref{CLF_compatible} is satisfied. Similarly, we have $\tilde{w}(\mathbf{x})=w(\mathbf{x})+\Delta_{V}\mathbf{d}(\mathbf{x})^{\top} \mathbf{d}(\mathbf{x})-\Delta_{h}\mathbf{b}(\mathbf{x})^{\top} \mathbf{d}(\mathbf{x})\geq 0$ if $\frac{\|\mathbf{b}(\mathbf{x})\mathbf{d}(\mathbf{x})^{\top}\|}{\|\mathbf{b}(\mathbf{x})\|\|\mathbf{d}(\mathbf{x})\|}= 1\,\,\, \text{and}\,\,\, w(\mathbf{x})\geq 0$. Since one of the following conditions can be satisfied, we can ensure the two conditions given in~\eqref{CLF_CBF_GP_Condition} are compatible.
\begin{equation}\label{Compatibility_updated_conditions}
\begin{split}
\begin{cases}
     &\frac{\|\mathbf{b}(\mathbf{x})\mathbf{d}(\mathbf{x})^{\top}\|}{\|\mathbf{b}(\mathbf{x})\|\|\mathbf{d}(\mathbf{x})\|}\neq 1,\\
     &\frac{\|\mathbf{b}(\mathbf{x})\mathbf{d}(\mathbf{x})^{\top}\|}{\|\mathbf{b}(\mathbf{x})\|\|\mathbf{d}(\mathbf{x})\|}= 1\,\,\, \text{and}\,\,\, \tilde{v}(\mathbf{x})\geq 0, \\
     &\frac{\|\mathbf{b}(\mathbf{x})\mathbf{d}(\mathbf{x})^{\top}\|}{\|\mathbf{b}(\mathbf{x})\|\|\mathbf{d}(\mathbf{x})\|}= 1\,\,\, \text{and}\,\,\, \tilde{w}(\mathbf{x})\geq 0.
     \end{cases}
\end{split}
\end{equation}
Next, with Lemma~\ref{GP_Sufficient}, the compatibility of the CLF $V(\mathbf{x})$ and CBF $h(\mathbf{x})$ for the model~\eqref{Affine_Control_System_Unknown} can be achieved with a probability $(1-\delta)^{n}$.
\end{proof}
\subsection{Compatible Case: Universal Formula with GPs}
In Lemma~\ref{Compatible}, we assume that the compatibility of the CLF $V(\mathbf{x})$ and CBF $h(\mathbf{x})$ w.r.t. the model~\eqref{Affine_Control_System} is primarily established. This is reasonable since, before employing GP regression, we usually first construct compatible CLF $V(\mathbf{x})$ and CBF $h(\mathbf{x})$ based on the model with unknown model uncertainties in~\eqref{Affine_Control_System}. Afterward, we use the conditions provided in~\eqref{CLF_compatible} and~\eqref{CBF_compatible} to guarantee the compatibility of  $V(\mathbf{x})$ and $h(\mathbf{x})$ for~\eqref{Affine_Control_System_Unknown}. However, for the case that no  prior information regarding the compatibility of the CLF $V(\mathbf{x})$ and CBF $h(\mathbf{x})$ concerning the model~\eqref{Affine_Control_System} is available, we can directly employ the condition~\eqref{Compatibility_updated_conditions} to guarantee the compatibility of the CLF $V(\mathbf{x})$ and CBF $h(\mathbf{x})$ for the model~\eqref{Affine_Control_System_Unknown}.
\begin{Thm}\label{Universal_Formula}
  Suppose that i) both $\mathbf{b}(\mathbf{x})$ and $\mathbf{d}(\mathbf{x})$ are non-zero vectors; ii) the compatibility of the CLF $V(\mathbf{x})$ and CBF $h(\mathbf{x})$ w.r.t. the model~\eqref{Affine_Control_System} is established; and iii) the conditions~\eqref{CLF_compatible} and ~\eqref{CBF_compatible} are satisfied. We design a universal formula for system~\eqref{Affine_Control_System_Unknown} as follows
  \begin{equation}\label{QP_GP_Control_Law_relaxed}
    \begin{split}
      \tilde{\mathbf{u}}_{\mathrm{Uni}}^{\star}(\mathbf{x})=\left\{\begin{array}{ll}
\tilde{\mathbf{m}}_{\mathrm{Uni-CLF}}^{\star}(\mathbf{x}),&\quad\mathbf{x}\in\tilde{\mathcal{P}}_{1}\\
\tilde{\mathbf{n}}_{\mathrm{Uni-CBF}}^{\star}(\mathbf{x}),&\quad\mathbf{x}\in\tilde{\mathcal{P}}_{2}\\
\tilde{\mathbf{p}}_{\mathrm{Uni}}^{\star}(\mathbf{x}),&\quad\mathbf{x}\in\tilde{\mathcal{P}}_{3}\\
\mathbf{0},&\quad\mathbf{x}\in\tilde{\mathcal{P}}_{4},
\end{array}\right.  
    \end{split}
\end{equation}
where $\tilde{\mathbf{m}}(\mathbf{x})$, $\tilde{\mathbf{n}}(\mathbf{x})$, $\tilde{\mathbf{p}}_{\mathrm{Uni}}^{\star}(\mathbf{x})$, $\tilde{\mathcal{P}}_{1}$, $\tilde{\mathcal{P}}_{2}$, $\tilde{\mathcal{P}}_{3}$, and $\tilde{\mathcal{P}}_{4}$ can be obtained from ~\eqref{QP_Control_Law_relaxed} by replacing $a(\mathbf{x})$ and $c(\mathbf{x})$ with $\tilde{a}(\mathbf{x})$ and $\tilde{c}(\mathbf{x})$. The safety and closed-loop stability for can be guaranteed with a probability $(1-\delta)^{n}$. 
\end{Thm}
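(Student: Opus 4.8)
The plan is to reduce the uncertain problem to the deterministic one that Theorem~\ref{Compatible_Qp} already solves, and then transport the resulting guarantee back to~\eqref{Affine_Control_System_Unknown} through the GP error bound. The central observation is that the pair of inequalities in~\eqref{CLF_CBF_GP_Condition} is \emph{structurally identical} to the deterministic CLF and CBF conditions~\eqref{CLF_Condition} and~\eqref{CBF_Condition}: the only change is that the drift terms $a(\mathbf{x})$ and $c(\mathbf{x})$ are replaced by the computable quantities $\tilde a(\mathbf{x})$ and $\tilde c(\mathbf{x})$, while $\mathbf{b}(\mathbf{x})$, $\mathbf{d}(\mathbf{x})$ and the accompanying terms $\kappa\zeta$, $\rho\Gamma$ (recomputed consistently from $\tilde a$, $\tilde c$) retain their roles. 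Hence~\eqref{CLF_CBF_GP_Condition} is itself a bona fide deterministic CLF/CBF pair, to which the machinery of Lemma~\ref{Compatible_Lemma} and Theorem~\ref{Compatible_Qp} applies verbatim.

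First I would establish that this surrogate pair is compatible. By hypotheses (i)--(iii) together with Lemma~\ref{Compatible}, the modified compatibility conditions~\eqref{Compatibility_updated_conditions} hold, i.e. one of the three cases of Lemma~\ref{Compatible_Lemma} is satisfied with $\tilde v$ and $\tilde w$ in place of $v$ and $w$. This is exactly the compatibility premise required by Theorem~\ref{Compatible_Qp}. Applying that theorem to the surrogate pair then yields that the piecewise control law~\eqref{QP_GP_Control_Law_relaxed} --- whose branches and regions $\tilde{\mathcal P}_1,\dots,\tilde{\mathcal P}_4$ are obtained from~\eqref{QP_Control_Law_relaxed} under the substitution $a\mapsto\tilde a$, $c\mapsto\tilde c$ --- satisfies \emph{both} inequalities in~\eqref{CLF_CBF_GP_Condition} at every $\mathbf{x}\in\mathcal D$. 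Checking this branch by branch (the CLF branch makes the first inequality tight, the CBF branch the second, the interior branch $\tilde{\mathbf{p}}_{\mathrm{Uni}}^\star$ meets both, and $\mathbf 0$ is feasible on $\tilde{\mathcal P}_4$) is routine and mirrors the proof of Theorem~\ref{Compatible_Qp}.

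Finally I would transport the guarantee to the true system. Since~\eqref{QP_GP_Control_Law_relaxed} enforces~\eqref{CLF_CBF_GP_Condition} deterministically, Lemma~\ref{GP_Sufficient} immediately gives that the actual constraints~\eqref{CLF_CBF_Condition} for~\eqref{Affine_Control_System_Unknown} hold with probability at least $(1-\delta)^n$. On that event the first inequality yields $\dot V \le -\lambda V -\kappa\zeta(\mathbf{x})$, which is strictly negative for $\mathbf{x}\neq\mathbf 0$, so $V$ is a valid Lyapunov function and the origin is asymptotically stable; the second yields $\dot h \ge -\beta h(\mathbf{x}) + \rho\Gamma(\mathbf{x}) \ge -\beta h(\mathbf{x})$, which by the comparison lemma renders $\mathcal C$ forward invariant, i.e. safe. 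Both conclusions hold on the single event of Lemma~\ref{Bound_Learning_Error}, hence jointly with probability $(1-\delta)^n$.

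The main obstacle I anticipate is twofold and largely bookkeeping. First, one must verify that the substitution $a\mapsto\tilde a$, $c\mapsto\tilde c$ is genuinely self-consistent --- in particular that $\zeta$ and $\Gamma$ are recomputed from the shifted drifts so that each branch of~\eqref{QP_GP_Control_Law_relaxed} satisfies the surrogate inequalities with the same algebra as in Theorem~\ref{Compatible_Qp}, and that the regions $\tilde{\mathcal P}_1,\dots,\tilde{\mathcal P}_4$ still partition $\mathbb{R}^n$. Second, one must be careful with the probability accounting: both the compatibility (Lemma~\ref{Compatible}) and the constraint satisfaction (Lemma~\ref{GP_Sufficient}) are consequences of the \emph{same} uniform error bound of Lemma~\ref{Bound_Learning_Error}, so the guarantees do not compound and the probability remains $(1-\delta)^n$ rather than degrading to $(1-\delta)^{2n}$.
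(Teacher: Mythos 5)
Your proposal is correct and takes essentially the same route as the paper's own proof: establish compatibility of the surrogate conditions via Lemma~\ref{Compatible} under hypotheses i)--iii), note that the substituted universal formula (the machinery of Theorem~\ref{Compatible_Qp} with $a\mapsto\tilde a$, $c\mapsto\tilde c$) enforces~\eqref{CLF_CBF_GP_Condition} by construction, and then invoke Lemma~\ref{GP_Sufficient} to transfer the guarantee to the true constraints~\eqref{CLF_CBF_Condition} with probability $(1-\delta)^{n}$. Your closing remarks are in fact more careful than the paper's one-paragraph proof --- in particular the observation that both the compatibility claim and the constraint-satisfaction claim rest on the single confidence event of Lemma~\ref{Bound_Learning_Error}, so the probability stays at $(1-\delta)^{n}$ rather than compounding --- a point the paper asserts without justification.
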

\begin{proof}
Under the assumptions i), ii), and iii), we can establish the compatibility of the CLF $V(\mathbf{x})$ and CBF $h(\mathbf{x})$ for the model~\eqref{Affine_Control_System_Unknown} using Lemma~\ref{Compatible}, ensuring this compatibility with a probability of $(1-\delta)^{n}$. Followed by the application of a designed universal formula (by replacing $a(\mathbf{x})$ and $c(\mathbf{x})$ with $\tilde{a}(\mathbf{x})$ and $\tilde{c}(\mathbf{x})$ in ~\eqref{QP_Control_Law_relaxed}), which guarantees the satisfaction of condition~\eqref{CLF_CBF_GP_Condition}. As dictated by Lemma~\ref{GP_Sufficient}, it ensures that the constraints provided in \eqref{CLF_CBF_Condition} are satisfied with a probability of $(1-\delta)^{n}$. Consequently, the safety and closed-loop stability of the system~\eqref{Affine_Control_System_Unknown} is established with a probability of $(1-\delta)^{n}$.
\end{proof}
\subsection{Incompatible Case: Universal Formula with GPs}
In practice, the compatibility assumption, i.e., the condition \eqref{Compatibility_updated_conditions},  may not always hold. In the case of encountering incompatible safe stabilization, we need to consider relaxation strategies, where a typical choice is to sacrifice the stability requirement while taking safety constraints as a hard constraint. Following the method described in~\cite{universal_formula}, a universal formula accounts for incompatibility and can be designed accordingly. 
\begin{Thm}~\label{Universal_Formula_Relaxed}
    Assume that i) both $\mathbf{b}(\mathbf{x})$ and $\mathbf{d}(\mathbf{x})$ are non-zero vectors and ii) the compatibility of the CLF $V(\mathbf{x})$ and CBF $h(\mathbf{x})$ w.r.t. the model~\eqref{Affine_Control_System_Unknown} is not satisfied at a certain $\mathbf{x}$. The following universal formula allows us to ensure strict safety with a probability $(1-\delta)^{n}$ but no strict stability guarantees.
    \begin{equation}\label{QP_Control_Law_relaxed_first}
    \begin{split}
      \bar{\mathbf{u}}_{\mathrm{Uni}}^{\star}(\mathbf{x})=\left\{\begin{array}{ll}
\bar{\mathbf{m}}_{\mathrm{Uni-CLF}}^{\star}(\mathbf{x}),&\quad\mathbf{x}\in\bar{\mathcal{P}}_{1}\\
\bar{\mathbf{n}}_{\mathrm{Uni-CBF}}^{\star}(\mathbf{x}),&\quad\mathbf{x}\in\bar{\mathcal{P}}_{2}\\
\bar{\mathbf{p}}_{\mathrm{Uni}}^{\star}(\mathbf{x}),&\quad\mathbf{x}\in\bar{\mathcal{P}}_{3}\\
\mathbf{0},&\quad\mathbf{x}\in\bar{\mathcal{P}}_{4},
\end{array}\right.  
    \end{split}
\end{equation}
where $\bar{\mathbf{m}}_{\mathrm{CLF}}^{\star}(\mathbf{x})=-\frac{\tilde{a}(\mathbf{x})+\kappa\zeta(\mathbf{x})}{1/m+\mathbf{b}(\mathbf{x})\mathbf{b}(\mathbf{x})^{\top}} \mathbf{b}(\mathbf{x})^{\top}$, $\bar{\mathbf{n}}_{\mathrm{CBF}}^{\star}(\mathbf{x})= \frac{\rho\Gamma(\mathbf{x})-\tilde{c}(\mathbf{x})}{\mathbf{d}(\mathbf{x})\mathbf{d}(\mathbf{x})^{\top}} \mathbf{d}(\mathbf{x})^{\top}$,  $\bar{\mathbf{p}}_{\mathrm{QP}}=-\bar{\lambda}_{1}\bar{\mathbf{m}}_{\mathrm{CLF}}^{\star}(\mathbf{x})
+\bar{\lambda}_{2}\bar{\mathbf{n}}_{\mathrm{CBF}}^{\star}(\mathbf{x})$, $m>0$,
\begin{equation*}\label{bar_lambda_compute_no_ref}
    \begin{split}
        \begin{bmatrix}
\bar{\lambda}_{1}\\
\bar{\lambda}_{2}
\end{bmatrix}={\begin{bmatrix}
\frac{1}{m}+\mathbf{b}(\mathbf{x})\mathbf{b}(\mathbf{x})^{\top} & -\mathbf{b}(\mathbf{x})\mathbf{d}(\mathbf{x})^{\top}\\
-\mathbf{b}(\mathbf{x})\mathbf{d}(\mathbf{x})^{\top} & \mathbf{d}(\mathbf{x})\mathbf{d}(\mathbf{x})^{\top}
\end{bmatrix}}^{-1}\begin{bmatrix}
\tilde{a}(\mathbf{x})+\kappa\zeta(\mathbf{x})\\
\Gamma(\mathbf{x})-\rho \tilde{c}(\mathbf{x})
\end{bmatrix},
    \end{split}
\end{equation*}
\begin{equation*}\label{Domain_of_sets_relaxed_new}
\begin{aligned}
&\bar{\mathcal{P}}_{1}=\{\mathbf{x}\in\mathbb{R}^{n}|\tilde{a}(\mathbf{x})+\kappa\zeta(\mathbf{x})\geq 0,\bar{v}(\mathbf{x})< 0\},\\
&\bar{\mathcal{P}}_{2}=\{\mathbf{x}\in\mathbb{R}^{n}|\tilde{c}(\mathbf{x})-\rho\Gamma(\mathbf{x})\leq 0,\bar{w}(\mathbf{x})< 0\},\\
&\bar{\mathcal{P}}_{3}=\{\mathbf{x}\in\mathbb{R}^{n}|\bar{w}(\mathbf{x})\geq 0,\bar{v}(\mathbf{x})\geq 0, \\
&\mathbf{b}(\mathbf{x})^{\top} \mathbf{b}(\mathbf{x}) \mathbf{d}(\mathbf{x})^{\top} \mathbf{d}(\mathbf{x})-\mathbf{b}(\mathbf{x})^{\top} \mathbf{d}(\mathbf{x}) \mathbf{b}(\mathbf{x})^{\top} \mathbf{d}(\mathbf{x}) \neq 0\},\\
&\bar{\mathcal{P}}_{4}=\{\mathbf{x}\in\mathbb{R}^{n}|\tilde{a}(\mathbf{x})+\kappa\zeta(\mathbf{x})< 0,\tilde{c}(\mathbf{x})-\rho\Gamma(\mathbf{x})> 0\}.
\end{aligned}
\end{equation*}
where $\bar{w}(\mathbf{x})=\tilde{w}(\mathbf{x})$ and
$\bar{v}(\mathbf{x})=(\tilde{a}(\mathbf{x})+\kappa\zeta(\mathbf{x}))\mathbf{d}(\mathbf{x})\mathbf{d}(\mathbf{x})^{\top}-(\tilde{c}(\mathbf{x})-\rho\Gamma(\mathbf{x}))(\frac{1}{m}+\mathbf{b}(\mathbf{x}) \mathbf{d}(\mathbf{x})^{\top})$.
\end{Thm}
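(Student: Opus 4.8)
The plan is to reduce Theorem~\ref{Universal_Formula_Relaxed} to the structure already established in Theorem~\ref{Compatible_Qp}, via two modifications: replacing the nominal Lie-derivative terms $a(\mathbf{x}),c(\mathbf{x})$ with the GP-inflated terms $\tilde{a}(\mathbf{x}),\tilde{c}(\mathbf{x})$ of Lemma~\ref{GP_Sufficient}, and introducing a slack variable on the (stability) CLF constraint so that the (safety) CBF constraint can still be satisfied even when the compatibility condition~\eqref{Compatibility_updated_conditions} fails at some $\mathbf{x}$. The safety conclusion will then be transferred from the GP-modified condition to the true condition~\eqref{CLF_CBF_Condition} through Lemma~\ref{GP_Sufficient}, picking up the probability $(1-\delta)^{n}$.

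First I would exhibit the point-wise relaxed quadratic program that~\eqref{QP_Control_Law_relaxed_first} solves implicitly. With a slack $\delta$ penalized by weight $m>0$, this is $\min_{\mathbf{u},\delta}\tfrac{1}{2}\|\mathbf{u}\|^{2}+\tfrac{m}{2}\delta^{2}$ subject to the relaxed CLF condition $\tilde{a}(\mathbf{x})+\mathbf{b}(\mathbf{x})\mathbf{u}+\kappa\zeta(\mathbf{x})\leq\delta$ and the hard CBF condition $\tilde{c}(\mathbf{x})+\mathbf{d}(\mathbf{x})\mathbf{u}\geq\rho\Gamma(\mathbf{x})$. Writing the KKT stationarity conditions and eliminating $\delta=\mu_{\mathrm{CLF}}/m$ produces exactly the regularized denominator $1/m+\mathbf{b}(\mathbf{x})\mathbf{b}(\mathbf{x})^{\top}$ appearing in $\bar{\mathbf{m}}_{\mathrm{CLF}}^{\star}$ and in the $2\times2$ system defining $\bar{\lambda}_{1},\bar{\lambda}_{2}$; letting $m\to\infty$ recovers the unrelaxed formula of Theorem~\ref{Compatible_Qp}.

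The core of the argument is the case analysis over activity patterns, carried out exactly as for Theorem~\ref{Compatible_Qp}. The regions $\bar{\mathcal{P}}_{1},\dots,\bar{\mathcal{P}}_{4}$ correspond to the CLF constraint active only, the CBF constraint active only, both active, and neither active; the sign indicators $\bar{v}(\mathbf{x}),\bar{w}(\mathbf{x})$ play the role that $v(\mathbf{x}),w(\mathbf{x})$ play in Lemma~\ref{Compatible_Lemma}, now evaluated with $\tilde{a},\tilde{c}$ and the regularization $1/m$, and they decide which constraint binds at each $\mathbf{x}$. Since the CBF constraint is kept hard, I would verify pointwise that $\tilde{c}(\mathbf{x})+\mathbf{d}(\mathbf{x})\bar{\mathbf{u}}_{\mathrm{Uni}}^{\star}(\mathbf{x})\geq\rho\Gamma(\mathbf{x})$ in all four regions: in $\bar{\mathcal{P}}_{2}$ and $\bar{\mathcal{P}}_{3}$ it holds with equality by construction of $\bar{\mathbf{n}}_{\mathrm{CBF}}^{\star}$ and $\bar{\mathbf{p}}_{\mathrm{Uni}}^{\star}$, while in $\bar{\mathcal{P}}_{1},\bar{\mathcal{P}}_{4}$ the defining sign conditions force strict satisfaction. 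Invoking Lemma~\ref{GP_Sufficient}, satisfaction of the GP-modified CBF condition implies the true CBF condition in~\eqref{CLF_CBF_Condition} holds with probability $(1-\delta)^{n}$, whence forward invariance of $\mathcal{C}$, i.e.\ strict safety, follows with that probability. Because the slack $\delta$ may be strictly positive wherever the two constraints conflict, the CLF condition can be violated, which is precisely why only safety—and not asymptotic stability—is asserted.

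I expect the main obstacle to be the doubly-active region $\bar{\mathcal{P}}_{3}$: one must show the regularized $2\times2$ KKT matrix remains invertible under the stated determinant condition, that the resulting $\bar{\mathbf{p}}_{\mathrm{Uni}}^{\star}$ still meets the hard CBF constraint with equality, and—jointly with the sign bookkeeping through the modified $\bar{v},\bar{w}$—that the four regions partition $\mathbb{R}^{n}$ consistently with the assumed activity patterns. Verifying this consistency and the correct sign propagation, rather than any single closed-form computation, is the delicate part of the proof.
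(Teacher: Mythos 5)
Your proposal is correct and follows essentially the same route as the paper: the paper likewise identifies \eqref{QP_Control_Law_relaxed_first} as the analytical solution of a slack-relaxed QP with a hard CBF constraint (which it shows is always compatible by passing to the augmented input $\underline{\mathbf{u}}=[\mathbf{u},\chi]^{\top}$, where $\underline{\mathbf{b}}=[\mathbf{b},1]$ and $\underline{\mathbf{d}}=[\mathbf{d},0]$ can never be parallel --- the role your region-by-region CBF verification plays) and then invokes Lemma~\ref{GP_Sufficient} to transfer strict safety to the true dynamics with probability $(1-\delta)^{n}$, while the slack on the CLF constraint forfeits any stability guarantee, exactly as you argue. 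One detail in your favor: your penalty $\tfrac{m}{2}\chi^{2}$ with $\chi=\lambda_{1}/m$ is the weighting actually consistent with the stated denominator $1/m+\mathbf{b}(\mathbf{x})\mathbf{b}(\mathbf{x})^{\top}$ and with recovering Theorem~\ref{Compatible_Qp} as $m\to\infty$, whereas the paper's proof writes the objective as $\tfrac{1}{2}\bigl(\|\mathbf{u}\|^{2}+\chi^{2}/m\bigr)$, whose KKT system would yield $m+\mathbf{b}(\mathbf{x})\mathbf{b}(\mathbf{x})^{\top}$ --- an apparent typo in the paper rather than a flaw in your argument.
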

\begin{proof}
    As revealed in~\cite{universal_formula}, the control law presented in~\eqref{QP_Control_Law_relaxed_first} is an analytical solution for the following optimization problem.
\begin{equation}\label{CLF_CBF_QP_Proof}
\begin{split}
&\min_{\mathbf{u},\chi}\frac{1}{2}\left(\|\mathbf{u}\|^{2}+1/m\chi^{2}\right)\\
&\text{s.t.}\quad\tilde{a}(\mathbf{x})+\mathbf{b}(\mathbf{x})\mathbf{u}\leq-\kappa\zeta(\mathbf{x})+\chi,\\
&\qquad\tilde{c}(\mathbf{x})+\mathbf{d}(\mathbf{x})\mathbf{u}\geq \rho\Gamma(\mathbf{x}),
\end{split}
\end{equation}
By defining $\underline{\mathbf{u}}=[\mathbf{u},\chi]^{\top}$, the two inequality condition in~\eqref{CLF_CBF_QP_Proof} can be rewritten as:
\begin{equation}\label{Rewritten}
\begin{split}
\tilde{a}(\mathbf{x})+\underline{\mathbf{b}}(\mathbf{x})\underline{\mathbf{u}}\leq -\kappa\zeta(\mathbf{x}),\quad\tilde{c}(\mathbf{x})+\underline{\mathbf{d}}(\mathbf{x})\underline{\mathbf{u}}\geq \rho\Gamma(\mathbf{x}),
\end{split}
\end{equation}
where $\underline{\mathbf{b}}(\mathbf{x})=[\mathbf{b}(\mathbf{x}), 1]$ and $\underline{\mathbf{d}}(\mathbf{x})=[\mathbf{d}(\mathbf{x}), 0]$.
In this case, the compatible condition $\frac{\|\underline{\mathbf{b}}(\mathbf{x})\underline{\mathbf{d}}(\mathbf{x})^{\top}\|}{\|\underline{\mathbf{b}}(\mathbf{x})\|\|\underline{\mathbf{d}}(\mathbf{x})\|}\neq 1$ is always satisfied. with Lemma~\ref{GP_Sufficient}, the compatibility of the CLF $V(\mathbf{x})$ and CBF $h(\mathbf{x})$ for the model~\eqref{Affine_Control_System_Unknown} (with an introduction of the slacking variable $\chi$) can be achieved with a probability $(1-\delta)^{n}$. Next, using the universal formula from~\eqref{QP_Control_Law_relaxed_first}, we achieve $\tilde{c}(\mathbf{x})+\mathbf{d}(\mathbf{x})\mathbf{u}\geq 0$, but we cannot ensure that $\tilde{a}(\mathbf{x})+\mathbf{b}(\mathbf{x})\mathbf{u}\leq 0$. Based on Lemma~\ref{GP_Sufficient}, we achieve strict safety with a probability $(1-\delta)^{n}$. However, since the stability is relaxed as shown in~\eqref{CLF_CBF_QP_Proof}, the control law~\eqref{QP_Control_Law_relaxed_first} does not provide stability guarantees.
\end{proof}
\subsection{Deal with Control Limit: A Projection-based Approach}
    For both Theorem~\ref{Universal_Formula} and Theorem~\ref{Universal_Formula_Relaxed}, we did not consider the control limits of $\mathbf{u}$. However, this assumption is not practical, as an actual system usually has a control limit. To address this problem, we first provide the following assumption.
    \begin{equation}
     \tilde{\mathcal{U}}=\{\mathbf{u}\in\mathbb{R}^{m}|\mathbf{u}_{\mathrm{min}}\leq \mathbf{u}\leq \mathbf{u}_{\mathrm{max}}\},
\end{equation}
where $\mathbf{u}_{\mathrm{min}}$ and $\mathbf{u}_{\mathrm{max}}$ are constant vectors, which denote the minimal and maximal values of the control input.  Next, we perform a projection operation to the universal formula that aligns it with the nearest point within the set $\tilde{\mathcal{U}}$. To clarify this concept, we introduce the following definitions: $\mathcal{S}_{V}(\mathbf{x})=\{\mathbf{x}\in\mathbb{R}^{n}|\tilde{a}(\mathbf{x})+\mathbf{b}(\mathbf{x})\mathbf{u}\leq -\kappa\sigma(\mathbf{x})\}$ and $\mathcal{S}_{h}(\mathbf{x})=\{\mathbf{x}\in\mathbb{R}^{n}|\tilde{c}(\mathbf{x})+\mathbf{d}(\mathbf{x})\mathbf{u}\geq \rho\Gamma(\mathbf{x})\}$. Afterward, we provide a graphical interpretation in Fig.~\ref{Projection_Inter}~\footnotetext[1]{To facilitate graphical interpretation, we have chosen to set the control input dimension to 2. However, this projection idea can be extended to cases of arbitrary dimensions.}. By denoting the projected control input as $\mathbf{u}_{\mathrm{proj}}$, the optimal control input can be obtained by solving the following optimization problem.
\begin{equation}\label{QP_projection}
        \begin{split}
            \min\limits_{\mathbf{u}_{\mathrm{proj}}\in\mathbb{R}^{m}}&\frac{1}{2}\|\mathbf{u}_{\mathrm{proj}}-\mathbf{u}_{\mathrm{Uni}}\|^{2}\\
            &\mathbf{u}_{\mathrm{min}}\leq \mathbf{u}_{\mathrm{Proj}}\leq \mathbf{u}_{\mathrm{max}}.
        \end{split}
    \end{equation}
\begin{figure}[tp]
 \centering
    \makebox[0pt]{%
    \includegraphics[width=2.5in]{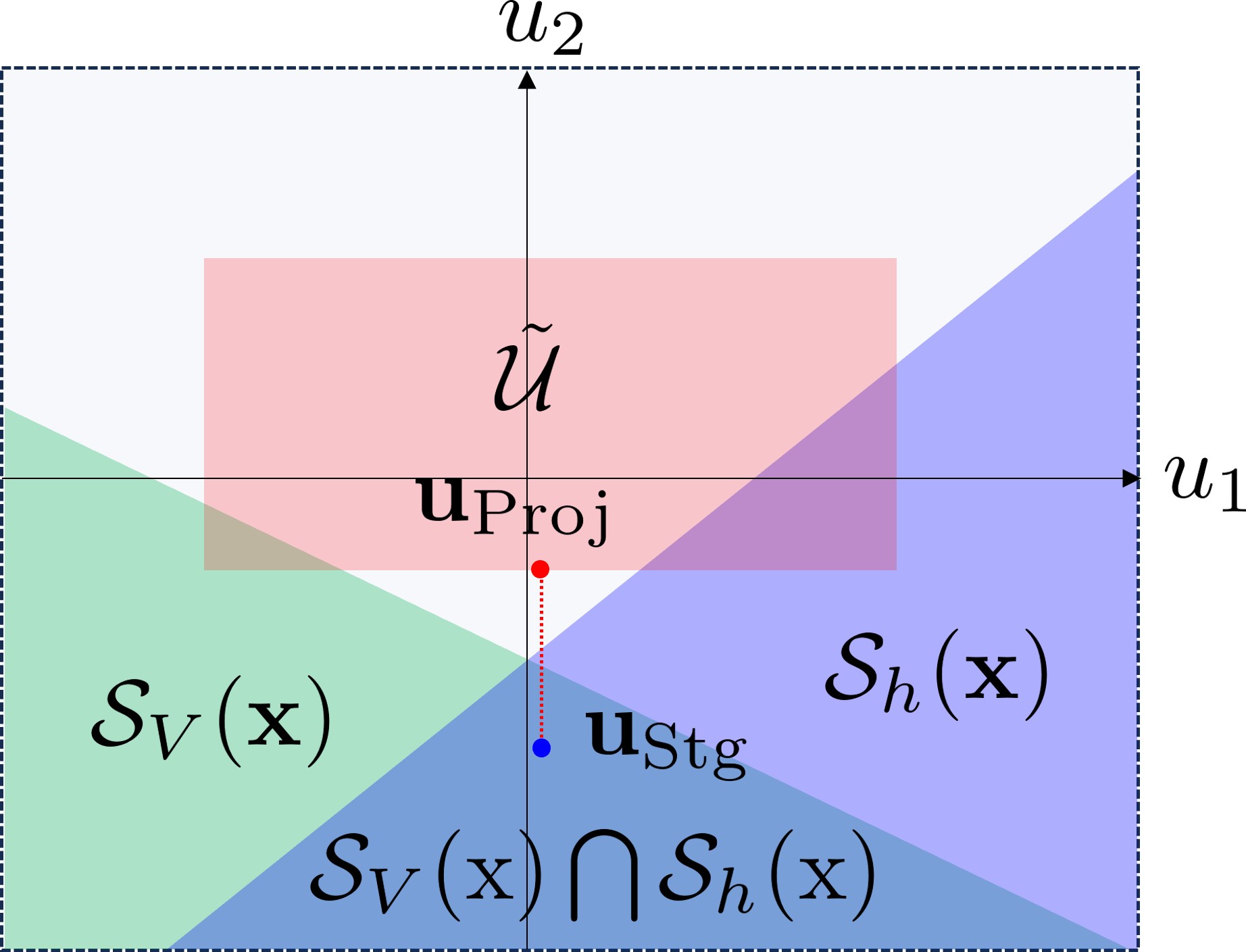}}
    \caption{A graphical interpretation of the projection idea.}
    \label{Projection_Inter}
\end{figure}

\noindent
As shown in Fig.~\ref{Projection_Inter}, the control input $\mathbf{u}_{\mathrm{Proj}}$ falls within the admissible range of control inputs, denoted as $\tilde{\mathcal{U}}$. It is true that $\mathbf{u}_{\mathrm{Proj}}$ lacks stability and safety guarantees, as it does not belong to either $\mathcal{S}_{V}(\mathbf{x})$ or $\mathcal{S}_{h}(\mathbf{x})$. However, this operation can be interpreted as a relaxation of both stability and safety requirements but requires a mandatory satisfaction of the control input limit. This relaxation is acceptable under certain circumstances, particularly when strict stability guarantees are not a necessity, and there are some safety margins.  
\section{Application Study and Simulation Results}\label{Application_study}
\begin{figure*}[tp]
 \centering
    \makebox[0pt]{%
    \includegraphics[width=7.5in]{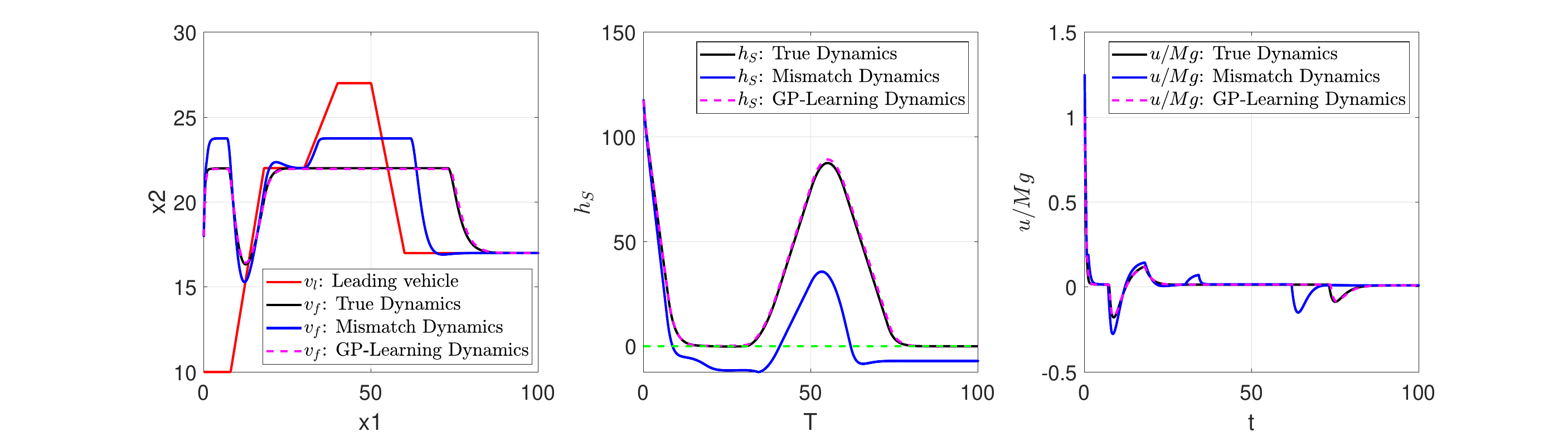}}
    \caption{Simulation results of the ACC problem for three different scenarios: universal formula applied to the true system dynamics (serving as a benchmark), mismatched dynamics, and GP learning-based dynamics. Left: the behavior of the follower (controlled) vehicle. Middle: the value of the CBF. Right: the behavior os the control input (scaled by $1/Mg$).
    }
    \label{Simulation_Results}
\end{figure*}
In this section, we evaluate the performance of our GP-based universal formula using an ACC example. The simulation results showcase the efficiency of the proposed solution. 

The application scenario involves two vehicles, the lead and the follower, both traveling in a straight line. The follower vehicle is equipped with ACC, and its mission is to attain the desired speed while ensuring a safe distance from the lead vehicle. This situation can be succinctly described by the compact expression of the system dynamics:
\begin{equation}\label{ACC_model}
\dot{\mathbf{x}}=\left(\begin{array}{c}
-F_{r} / M \\
a_{L} \\
x_{2}-x_{1}
\end{array}\right)+\left(\begin{array}{c}
1 / M \\
0 \\
0
\end{array}\right) \mu ,
\end{equation}
where $\mathbf{x}=\left(x_{1},x_{2},x_{3}\right):=\left(v_{f},v_{l},D\right)$, $v_{f}$ and $v_{l}$ are the velocities of the following and leading vehicle, $D$ is the distance between the two vehicles, $M$ is the mass of the follower vehicle; $F_{r}(x)=f_{0}+f_{1}v_{f}+f_{2}v_{f}^{2}$ is the aerodynamic drag with constants $f_{0}$, $f_{1}$ and $f_{2}$ determined empirically; $a_{L}\in\left[-a_{l}g,a_{l}^{\prime}g\right]$ is the overall acceleration/deceleration of the lead vehicle with $a_{l}$, $a_{l}^{\prime}$ fractions of the gravitational constant $g$ for deceleration and acceleration, respectively. The control input $\mu\in\mathbb{R}$ of the following car is wheel force. The quantities of the parameters of the true model are listed in TABLE I. 

To incorporate model uncertainty, we deliberately deviate from the nominal model by employing different parameters: $f_{0}=10$, $f_{1}=5$, and $f_{2}=0.25$, rather than the parameters specified in TABLE I. Moreover, the safety requirement for the system described in Equation~\eqref{ACC_model}, which indicates ``maintaining a safe distance from the lead vehicle". To satisfy this criterion, we enforce the constraint that the minimum distance between the two vehicles should be equivalent to ``half the speedometer"~\cite{Hard_Constraint}. To reach the desired velocity while preserving a safe distance, we introduce the CLF and CBF for the ACC system defined in Equation \eqref{ACC_model}, which are detailed below:
\begin{equation*}
    \begin{split}
    &V(\mathbf{x})=(x_{1}-v_{d})^{2},\quad h(\mathbf{x})=x_{3}-\tau_d x_{1},
    \end{split}
\end{equation*}
where $\tau_{d}$ and $v_d$ are given in TABLE I, which are both constants. 
\begin{table}[tp]
\footnotesize
\centering
\caption{Parameter Quantities Used in Simulation}
\begin{tabular}{cccccc}
  \hline
 $M$ & $1650\,\, \mathrm{kg}$ & $g$ & $9.81\,\,\mathrm{m/s^{2}}$ & $v_{l}(0)$ & $15\,\, \mathrm{m/s}$\\
 $f_{0}$ & $0.1\,\, \mathrm{N}$ & $v_{d}$ & $22\,\,\mathrm{m/s}$ & $D(0)$ & $150\,\,\mathrm{m}$\\
 $f_{1}$ & $5\,\, \mathrm{Ns/m}$ & $m$ & $100$ & $\tau_{d}$ & $1.8$\\
 $f_{2}$ & $0.25\,\, \mathrm{Ns^{2}/m}$ & $v_{f}(0)$ & $18\,\, \mathrm{m/s}$ & $a_{l}$ & $0.3$\\
  \hline
\end{tabular}
\label{table: index1}
\end{table}
By calculating the Lie derivative of $h(\mathbf{x})$ and $V(\mathbf{x})$ (setting $\lambda=4$ and $\beta=0.5$ in~\eqref{CLF_Condition} and~\eqref{CBF_Condition}), we have
$a(\mathbf{x}) = -2(x_{1}-v_{d})F_r/M+4V(\mathbf{x})$, $\mathbf{b}(\mathbf{x}) = [2(x_{1}-v_{d})F_r/M, 0, 0]$, $c(\mathbf{x}) = x_{2}-x_{1}+\tau_{d}F_r/M+0.5h(\mathbf{x})$, and
$\mathbf{d}(\mathbf{x}) = [-\tau_{d}/M, 0, 0]$.
In the simulation, we specifically consider a scenario where the assumption of compatibility between the CLF and the CBF is not met. In particular, we provide the value of $a_{L}$ as follows:
\begin{equation*}\label{Acceleration}
    \begin{split}
      a_{L}=\left\{\begin{array}{ll}
0\,\,\mathrm{m/s^{2}} &t\leq 8\,\,\mathrm{s}\\
1.2\,\,\mathrm{m/s^{2}} &8<t\leq 18\,\,\mathrm{s}\\
0\,\,\mathrm{m/s^{2}} &18<t\leq 30\,\,\mathrm{s}\\
0.5\,\,\mathrm{m/s^{2}} &30<t\leq 40\,\,\mathrm{s}\\
0\,\,\mathrm{m/s^{2}} &40<t\leq 50\,\,\mathrm{s}\\
-1\,\,\mathrm{m/s^{2}} &50<t\leq 60\,\,\mathrm{s}\\
 0\,\,\mathrm{m/s^{2}}&\mathrm{otherwise}
\end{array}\right.  
    \end{split}
\end{equation*}
To construct the universal formula as in~\eqref{QP_Control_Law_relaxed_first}, we set $m=10$, $\kappa=0.2$, and $\rho=0.1$. Next, we are devoted to evaluating the efficacy of the GP-based universal formula. To gather data for GP regression, we conducted initial experiments as follows: We randomly selected an initial state from the distribution $\mathcal{X}_{0}$ and allowed the system to evolve over a finite time interval using various control inputs. These inputs covered the entire range of control input scenarios relevant to the safe stabilization problem. Using a sample-based approach, we collected discrete-time state and input history data. Subsequently, we derived the measurement $\mathbf{y}^{(q)}$ for the unknown function $\bm{\omega}(\mathbf{x})$, as outlined in Assumption~\ref{Data_Collection}.
\begin{equation}
\mathbf{y}^{(q)}=\dot{\mathbf{x}}^{(q)}-\mathbf{f}(\mathbf{x}^{(q)})-\mathbf{g}(\mathbf{x}^{(q)})\mathbf{u}^{(q)}+\bm{\varepsilon}^{(q)}, q=1,\cdots Q.
\end{equation}
where $\dot{\mathbf{x}}^{(q)}$ is assumed to be known, $Q=2000$ is the size of the training dataset. By using the GPML toolbox~\cite{GPML}, we derive the mean function and covariance function as specified in Equations~\eqref{mean_function} and~\eqref{covariance_function}. Subsequently, due to the incompatibility of the safe stabilization problem under consideration, the universal formula following Equation~\eqref{QP_Control_Law_relaxed_first} is constructed.

Fig.~\ref{Simulation_Results} presents the simulation results including three different scenarios: universal formula applied to the true system dynamics (serving as a benchmark), mismatched dynamics, and GP learning-based dynamics. In the left side of Fig.~\ref{Simulation_Results}, we observe that the safe stabilization task is successfully achieved by using the universal formula with GP, where the controlled vehicle fails to attain the desired cruising speed (at all times) but without violating the distance constraint. This aligns with the conclusion from Theorem~\ref{Universal_Formula_Relaxed}, indicating that the universal control law lacks closed-loop stability but has strict safety guarantees. Besides, the GP-based universal formula's performance is nearly identical to the scenario with true dynamics, primarily due to the successful GP learning, achieved through the selection of $\delta=0.987$ in~\eqref{Bound_Condition}. However, as indicated in both the left and middle figures of Fig.~\ref{Simulation_Results}, the universal formula with mismatched dynamics fails to ensure safety as $h_{S}<0$, which strengthens the effectiveness of the GP-based universal formula. Finally, as depicted on the right side of Fig.~\ref{Simulation_Results}, the control input, scaled by $1/Mg$, is provided to show the behavior of the universal control law as outlined in Equation~\eqref{QP_Control_Law_relaxed_first}.
\section{Conclusions}
In this paper, a novel approach that combines GPs with the universal formula is proposed to address safe stabilization under model uncertainties. Our results demonstrate that this approach offers high probability of closed-loop stability and safety guarantees, even when the model is inaccurate. We also introduced compatibility conditions to assess the conflicts between GP-based CLF and CBF conditions, which determines whether a universal formula can be constructed to address a safe stabilization problem. Furthermore, in scenarios where compatibility assumptions do not hold and control system limits are present, we presented a modified universal formula that relaxes stability constraints and a projection-based method that considers control limits. The effectiveness of our approach was validated through simulations, with a focus on the ACC application. These results highlight the potential practical applications of our method in real-world scenarios, offering a promising avenue for addressing safe stabilization challenges in the presence of model uncertainties.
\bibliographystyle{IEEEtran}	
\bibliography{Univ_GP}

\begin{thebibliography}{10}
\providecommand{\url}[1]{#1}
\csname url@samestyle\endcsname
\providecommand{\newblock}{\relax}
\providecommand{\bibinfo}[2]{#2}
\providecommand{\BIBentrySTDinterwordspacing}{\spaceskip=0pt\relax}
\providecommand{\BIBentryALTinterwordstretchfactor}{4}
\providecommand{\BIBentryALTinterwordspacing}{\spaceskip=\fontdimen2\font plus
\BIBentryALTinterwordstretchfactor\fontdimen3\font minus \fontdimen4\font\relax}
\providecommand{\BIBforeignlanguage}[2]{{%
\expandafter\ifx\csname l@#1\endcsname\relax
\typeout{** WARNING: IEEEtran.bst: No hyphenation pattern has been}%
\typeout{** loaded for the language `#1'. Using the pattern for}%
\typeout{** the default language instead.}%
\else
\language=\csname l@#1\endcsname
\fi
#2}}
\providecommand{\BIBdecl}{\relax}
\BIBdecl

\bibitem{Global_Stabilizable}
Z.~Artstein, ``Stabilization with relaxed controls,'' \emph{Nonlinear Analysis: {T}heory, {M}ethods \& {A}pplications}, vol.~7, no.~11, pp. 1163--1173, 1983.

\bibitem{quadrotor_safe_stabilization2}
G.~Wu and K.~Sreenath, ``Safety-critical control of a 3d quadrotor with range-limited sensing,'' in \emph{Dynamic Systems and Control Conference}, vol. 50695.\hskip 1em plus 0.5em minus 0.4em\relax American Society of Mechanical Engineers, 2016, p. V001T05A006.

\bibitem{ACC_Application}
A.~D. Ames, J.~W. Grizzle, and P.~Tabuada, ``Control barrier function based quadratic programs with application to adaptive cruise control,'' in \emph{53rd IEEE Conference on Decision and Control}.\hskip 1em plus 0.5em minus 0.4em\relax IEEE, 2014, pp. 6271--6278.

\bibitem{bipedal_application}
Q.~Nguyen and K.~Sreenath, ``Safety-critical control for dynamical bipedal walking with precise footstep placement,'' \emph{IFAC-PapersOnLine}, vol.~48, no.~27, pp. 147--154, 2015.

\bibitem{multi_robot_coordination}
L.~Lindemann and D.~V. Dimarogonas, ``Control barrier functions for signal temporal logic tasks,'' \emph{IEEE {C}ontrol {S}ystems {L}etters}, vol.~3, no.~1, pp. 96--101, 2018.

\bibitem{CLBF}
M.~Z. Romdlony and B.~Jayawardhana, ``Uniting control {L}yapunov and control barrier functions,'' in \emph{53rd IEEE Conference on Decision and Control}.\hskip 1em plus 0.5em minus 0.4em\relax IEEE, 2014, pp. 2293--2298.

\bibitem{Sontag_fromula}
E.~D. Sontag, ``A ‘universal’ construction of {A}rtstein's theorem on nonlinear stabilization,'' \emph{Systems \& {C}ontrol {L}etters}, vol.~13, no.~2, pp. 117--123, 1989.

\bibitem{CBF_Definition}
X.~Xu, P.~Tabuada, J.~W. Grizzle, and A.~D. Ames, ``Robustness of control barrier functions for safety critical control,'' \emph{IFAC-PapersOnLine}, vol.~48, no.~27, pp. 54--61, 2015.

\bibitem{Property_on_CLBF}
P.~Braun and C.~M. Kellett, ``On (the existence of) control {L}yapunov barrier functions,'' \emph{\url{https://epub.uni-bayreuth.de/id/eprint/3522/1/CLBFs_submission_pbraun.pdf}}, 2017.

\bibitem{Comment_on_CLBF}
M.~Jankovic, ``Robust control barrier functions for constrained stabilization of nonlinear systems,'' \emph{Automatica}, vol.~96, pp. 359--367, 2018.

\bibitem{Smooth_universal}
P.~Ong and J.~Cort{\'e}s, ``Universal formula for smooth safe stabilization,'' in \emph{2019 IEEE 58th {C}onference on {D}ecision and {C}ontrol (CDC)}.\hskip 1em plus 0.5em minus 0.4em\relax IEEE, 2019, pp. 2373--2378.

\bibitem{universal_formula}
M.~Li and Z.~Sun, ``A graphical interpretation and universal formula for safe stabilization,'' in \emph{2023 American Control Conference (ACC)}.\hskip 1em plus 0.5em minus 0.4em\relax IEEE, 2023, pp. 3012--3017.

\bibitem{ISSf}
S.~Kolathaya and A.~D. Ames, ``Input-to-state safety with control barrier functions,'' \emph{IEEE {C}ontrol {S}ystems {L}etters}, vol.~3, no.~1, pp. 108--113, 2018.

\bibitem{Observer_CBF}
D.~R. Agrawal and D.~Panagou, ``Safe and robust observer-controller synthesis using control barrier functions,'' \emph{IEEE {C}ontrol {S}ystems {L}etters}, vol.~7, pp. 127--132, 2022.

\bibitem{machine_learning_CBF}
A.~Taylor, A.~Singletary, Y.~Yue, and A.~Ames, ``Learning for safety-critical control with control barrier functions,'' in \emph{Learning for Dynamics and Control}.\hskip 1em plus 0.5em minus 0.4em\relax PMLR, 2020, pp. 708--717.

\bibitem{GP_CBF_Ref}
F.~Castaneda, J.~J. Choi, B.~Zhang, C.~J. Tomlin, and K.~Sreenath, ``Gaussian process-based min-norm stabilizing controller for control-affine systems with uncertain input effects and dynamics,'' in \emph{2021 American Control Conference (ACC)}.\hskip 1em plus 0.5em minus 0.4em\relax IEEE, 2021, pp. 3683--3690.

\bibitem{CLBF_NN}
C.~Dawson, Z.~Qin, S.~Gao, and C.~Fan, ``Safe nonlinear control using robust neural {L}yapunov-barrier functions,'' in \emph{Conference on Robot Learning}.\hskip 1em plus 0.5em minus 0.4em\relax PMLR, 2022, pp. 1724--1735.

\bibitem{Dynamical_model}
P.~Jagtap, G.~J. Pappas, and M.~Zamani, ``Control barrier functions for unknown nonlinear systems using {G}aussian processes,'' in \emph{2020 59th IEEE Conference on Decision and Control (CDC)}.\hskip 1em plus 0.5em minus 0.4em\relax IEEE, 2020, pp. 3699--3704.

\bibitem{Zeroing_CBF}
A.~D. Ames, X.~Xu, J.~W. Grizzle, and P.~Tabuada, ``Control barrier function based quadratic programs for safety critical systems,'' \emph{IEEE Transactions on Automatic Control}, vol.~62, no.~8, pp. 3861--3876, 2016.

\bibitem{Fully_Unknown_Dynamics}
V.~Dhiman, M.~J. Khojasteh, M.~Franceschetti, and N.~Atanasov, ``Control barriers in {B}ayesian learning of system dynamics,'' \emph{IEEE {T}ransactions on {A}utomatic {C}ontrol}, 2021.

\bibitem{GP_solution2}
F.~Castaneda, J.~J. Choi, B.~Zhang, C.~J. Tomlin, and K.~Sreenath, ``Gaussian process-based min-norm stabilizing controller for control-affine systems with uncertain input effects and dynamics,'' in \emph{2021 American Control Conference (ACC)}.\hskip 1em plus 0.5em minus 0.4em\relax IEEE, 2021, pp. 3683--3690.

\bibitem{Kernel_Functions}
H.~Wendland, \emph{Scattered data approximation}.\hskip 1em plus 0.5em minus 0.4em\relax Cambridge University Press, 2004, vol.~17.

\bibitem{Armin_paper}
A.~Lederer, A.~Capone, J.~Umlauft, and S.~Hirche, ``How training data impacts performance in learning-based control,'' \emph{IEEE {C}ontrol {S}ystems {L}etters}, vol.~5, no.~3, pp. 905--910, 2020.

\bibitem{Gamma_Selection}
N.~Srinivas, A.~Krause, S.~M. Kakade, and M.~W. Seeger, ``Information-theoretic regret bounds for {G}aussian process optimization in the bandit setting,'' \emph{IEEE {T}ransactions on {I}nformation {T}heory}, vol.~58, no.~5, pp. 3250--3265, 2012.

\bibitem{Hard_Constraint}
K.~Vogel, ``A comparison of headway and time to collision as safety indicators,'' \emph{Accident {A}nalysis \& {P}revention}, vol.~35, no.~3, pp. 427--433, 2003.

\bibitem{GPML}
C.~E. Rasmussen, ``Gaussian processes in machine learning,'' in \emph{Summer {S}chool on {M}achine {L}earning}.\hskip 1em plus 0.5em minus 0.4em\relax Springer, 2003, pp. 63--71.

\end{thebibliography}
\end{document}